\documentclass[journal,twoside,web]{ieeecolor}
\usepackage{generic}

\usepackage{cite}

\usepackage{amssymb,amsmath,latexsym,amsfonts,amsthm,mathtools}

\usepackage[colorlinks=true]{hyperref}
\hypersetup{colorlinks, breaklinks, citecolor=blue, linkcolor=blue, urlcolor=blue}
\usepackage{graphicx}
\graphicspath{{figures/}{results_data_informative_safety/}}
\usepackage{float,subcaption}
\usepackage{textcomp}
\usepackage{xcolor}
\definecolor{darkpastelpurple}{rgb}{0.59, 0.44, 0.84}
\usepackage[nameinlink]{cleveref}
\Crefformat{figure}{#2Fig.~#1#3}
\Crefmultiformat{figure}{Figs.~#2#1#3}{ and~#2#1#3}{, #2#1#3}{ and~#2#1#3}

\usepackage{tikz}
\usetikzlibrary{automata, shapes, arrows, calc, arrows.meta, fit, positioning}

\theoremstyle{plain}
\newtheorem{theorem}{Theorem}
\newtheorem{lemma}{Lemma}
\newtheorem{corollary}{Corollary}
\newtheorem{proposition}{Proposition}
\newtheorem*{problem*}{Problem}
\newtheorem{problem}{Problem}
\Crefname{problem}{Problem}{Problems} 
\crefname{problem}{problem}{problems}     

\theoremstyle{remark}
\newtheorem{remark}{Remark}
\newtheorem{assumption}{Assumption}
\Crefname{assumption}{Assumption}{Assumptions} 
\crefname{assumption}{assumption}{assumptions}     

\theoremstyle{definition}

\usepackage{mathrsfs}

\usepackage{newtxmath}
\usepackage{booktabs}

\newcommand{\R}{\mathbb{R}}

\newcommand{\diag}{\operatorname{diag}}
\newcommand{\one}{\mathbf{1}}
\newcommand{\zero}{\mathbf{0}}
\newcommand{\pos}[1]{\left[#1\right]_{+}}
\newcommand{\cK}{\mathcal{K}}
\newcommand{\cU}{\mathcal{U}}
\newcommand{\cX}{\mathcal{X}}
\newcommand{\cG}{\mathcal{G}}
\newcommand{\cL}{\mathcal{L}}

\begin{document}
\title{Networked Admissibility-Preserving Control for Directed Safe Coordination}
\author{Abhinav Sinha,~\IEEEmembership{Senior Member,~IEEE}, Lohitvel Gopikannan,~\IEEEmembership{Student Member,~IEEE}, and Shashi Ranjan Kumar,~\IEEEmembership{Senior Member,~IEEE}%
\thanks{A. Sinha is with the GALACxIS Lab, Department of Aerospace Engineering, University of Cincinnati, Cincinnati, OH 45221, USA (e-mail: abhinav.sinha@uc.edu). L. Gopikannan and S. R. Kumar are with the Intelligent Systems and Control (ISaC) Lab, Department of Aerospace Engineering, Indian Institute of Technology Bombay, Mumbai 400076, India. (e-mails: lohitvel@aero.iitb.ac.in, srk@aero.iitb.ac.in).}
}
\maketitle

\begin{abstract}
This paper addresses safety-critical coordination for scalar agents whose distributed commands are implemented through constrained physical-input dynamics. Agents communicate over a fixed weighted digraph with a directed spanning tree, while their outputs must remain inside a common moving safety corridor and their realized inputs must satisfy heterogeneous asymmetric bounds. We propose a networked Admissibility-Preserving Control (APC) architecture in which an Admissibility-Preserving Input Realization (APIR) governs physical inputs and a logarithmic barrier coordinate represents the safety corridor. The synthesis yields an exact cascade in which exponentially decaying realization errors drive nonsymmetric consensus dynamics. For every compatible compact initial set, the closed-loop system admits a unique complete solution, renders the moving corridor and actuator intervals forward invariant with uniform margins, keeps commands bounded, and achieves exponential consensus. We derive direction-specific sufficient conditions under which positive and negative control demands remain within their corresponding actuator limits. The analysis yields a closed-form barrier-coordinate limit determined by the left Perron vector and initial APIR mismatch. Under strong connectivity and the stated gain and compatibility conditions, partial pinning propagates a constant barrier reference from a nonempty informed subset and assigns the induced safety corridor trajectory. A non-weight-balanced example illustrates the directional certificate and predicted collective motion.
\end{abstract}
\begin{IEEEkeywords}
Admissibility-preserving control (APC), admissibility-preserving input realization (APIR), asymmetric actuator constraints, directed graphs, multi-agent systems, safe coordination.
\end{IEEEkeywords}

\section{Introduction}\label{sec:introduction}
\IEEEPARstart{S}{afety-critical} coordination requires every agent to remain admissible from initialization through agreement. For individual systems, barrier Lyapunov functions enforce state and output constraints \cite{TeeGeTay2009,TeeRenGe2011}, while control barrier functions formalize forward invariance \cite{AmesXuGrizzleTabuada2017}. In multi-agent systems, local interactions determine the collective motion; the unconstrained consensus problem has a mature theory \cite{OlfatiSaberMurray2004,RenBeard2005,OlfatiSaberFaxMurray2007}. Networked barrier-Lyapunov designs extend output-constraint enforcement to consensus \cite{ShenXu2018,TangYuDongLiRen2023}. Although these formulations cover directed interaction and, in some cases, dynamic output bounds with input saturation, merely lifting the barrier construction across the network does not treat the applied input as an independently initialized actuator state driven by a distinct bounded command. Before its realization error decays, that state consumes the available output margin and can shift the eventual agreement value of the directed network, which may result in unsafe operation. Thus, safe networked coordination must treat actuator realization as part of the closed-loop dynamics. Accordingly, preserving the moving output safety corridor and heterogeneous asymmetric actuator bounds throughout the realization transient while characterizing its effect on the eventual agreement value is crucial.

Over connected undirected graphs, safety-constrained coordination has been studied through consensus tracking with full state and input constraints \cite{FuWenYu2023}, node- and edge-wise funnel coupling \cite{LeeTrennShim2022,LeeBergerTrennShim2023}, and networked APC under asymmetric actuation and time-varying output constraints \cite{SinhaKumar2026}. Extending these guarantees to rooted, generally unbalanced digraphs is nontrivial because graph imbalance replaces average preservation with a topology-dependent agreement invariant and calls for a weighted stability analysis. Actuator transients entering through the root component can accordingly shift the eventual agreement value. Under directed interaction, barrier-Lyapunov designs have enforced output constraints \cite{ShenXu2018} and dynamic output bounds under input saturation \cite{TangYuDongLiRen2023}, but the applied input is represented algebraically or through static saturation. A static saturation map clips the requested input instantaneously; its memoryless representation leaves actuator initialization and the command-to-input transient outside the safety certificate. Actuator-aware barrier formulations have treated dynamically defined inputs \cite{AmesNotomistaWardiEgerstedt2021} and modeled or unmodeled actuator dynamics \cite{HuangChen2021,SeilerJankovicHellstrom2022}.

Multi-agent extensions have addressed safe leader-following consensus \cite{NiuAbdallahHayajneh2024}, navigation under limited actuation \cite{ZinageJhaChandraBakolas2025}, and state and inter-agent safety under state-dependent interaction \cite{WangDongHongJohansson2026}. Despite these advances, simultaneous forward invariance of a common moving output corridor and heterogeneous asymmetric actuator intervals remains to be established for leaderless consensus over rooted, generally unbalanced digraphs. In parallel, input-constrained consensus has been studied under saturation and event-triggered implementation \cite{YangMengDimarogonasJohansson2014,YiYangWuJohansson2019}; nonconvex velocity and control sets \cite{LinRenYangGui2018}, constraint-governor constructions \cite{OngDjamariHou2020,OngHou2021}, and heterogeneous asymmetric saturation over directed graphs \cite{ZuoJiZhangWangZhang2023,WangZuo2024} have likewise been treated. These designs impose bounds on an algebraic input or reshape admissible references rather than retain the physical input as a differential state with its own initial condition. The network funnel laws \cite{LeeTrennShim2022,LeeBergerTrennShim2023} preserve synchronization envelopes through coupling gains that diverge at the boundary. Although the resulting inputs are shown to remain bounded, the guarantees do not place them within designer-specified actuator intervals or represent them as dynamic actuator states. For single-system tracking, input-constrained funnel control protects a hard input bound by widening the performance funnel during saturation \cite{Berger2024}, thereby relaxing the prescribed error corridor. Safe realization over the network requires the moving output corridor and heterogeneous asymmetric actuator intervals to remain invariant under bounded realization commands. Admissibility-Preserving Input Realization (APIR) represents the physical input as an independently initialized actuator state \cite{KumarKumarSinha2026}. Over a rooted, generally unbalanced digraph, quantifying the resulting agreement shift is nontrivial because root-component APIR transients enter the collective mode with topology-dependent weights. The predicted shift determines where within the moving safety corridor the network will coordinate.

To this end, we propose a directed networked Admissibility-Preserving Control (APC) architecture that equips each agent with APIR and regulates the admissible physical-input state through a dedicated command. The architecture is constructed by introducing a logarithmic barrier coordinate for the moving corridor and deriving a closed-form distributed command that creates an exact cascade in which exponentially decaying APIR errors drive the rooted-digraph consensus dynamics. This structure yields direction-specific compatibility conditions that establish uniform output and actuator margins with bounded commands. The same cascade provides a closed-form actuator-transient correction to the directed agreement value. The proposed formulation extends networked APC from undirected networks with global reference access and a symmetric inner-bound certificate to leaderless rooted digraphs with directional certification \cite{SinhaKumar2026}. Unlike the funnel formulations, the compatibility certificate enforces designer-specified asymmetric actuator intervals without widening the prescribed moving output corridor. Under strong connectivity, a partial-pinning extension assigns the safe collective trajectory from any nonempty informed subset. The contributions can be summarized in four respects: (i) for compatible compact initial sets, the proposed distributed synthesis guarantees unique complete solutions, uniform output and actuator margins, bounded commands, and exponential consensus over rooted, generally non-weight-balanced digraphs using only local, in-neighbor, and common-corridor data online; (ii) we quantify in closed form how the initial APIR mismatch in the root component shifts the directed agreement value; (iii) we show that the one-sided compatibility test certifies a strictly larger actuator-bound set than the symmetric test whenever the directional demands differ; and (iv) we establish that, under strong connectivity, partial pinning from any nonempty informed subset assigns the safe collective trajectory.

\section{Problem Formulation}\label{sec:problem}
We consider a fixed weighted digraph $\cG=(\mathcal{V},\mathcal{E},\mathbf{A})$ with node set $\mathcal{V}=\{1,\ldots,N\}$. The convention $a_{ij}>0$ means that agent $i$ receives information from agent $j$. The in-degree of node $i$, the in-degree matrix, and the graph Laplacian are $d_i^{\mathrm{in}}=\sum_{j=1}^{N}a_{ij}$, $\mathbf{D}=\diag\{d_1^{\mathrm{in}},\ldots,d_N^{\mathrm{in}}\}$, and $\cL=\mathbf{D}-\mathbf{A}$, respectively.
\begin{assumption}\label{ass:graph}
The digraph $\cG$ contains a directed spanning tree.
\end{assumption}
Under \Cref{ass:graph}, zero is a simple eigenvalue of $\cL$. There exists a unique vector $\boldsymbol{\pi}\in\R_{\geq0}^{N}$ satisfying $\boldsymbol{\pi}^{\top}\cL=\zero^{\top}$ and $\boldsymbol{\pi}^{\top}\one=1$.
We set $\mathbf{P}=\one\boldsymbol{\pi}^{\top}$ and $\boldsymbol{\Pi}=\mathbf{I}_{N}-\mathbf{P}$. The projections $\mathbf{P}$ and $\boldsymbol{\Pi}$ extract the topology-dependent collective mode and the disagreement, respectively. For every rate $0<\lambda_{\cG}<\min_{\lambda\in\operatorname{spec}(\cL)\setminus\{0\}}\operatorname{Re}(\lambda)$, there is a constant $M_{\cG}\geq1$ such that
\begin{equation}\label{eq:semigroup}
\left\lVert e^{-\cL\tau}-\mathbf{P}\right\rVert
\leq M_{\cG}e^{-\lambda_{\cG}\tau},
~~\tau\geq0.
\end{equation}
The constants in \eqref{eq:semigroup} enter only the offline compatibility certificate. At each node of $\cG$, we consider a scalar agent with dynamics
\begin{equation}\label{eq:plant}
\dot{x}_i=u_i,
~~
y_i=x_i,
\end{equation}
where $u_i$ is the physical input delivered to the plant. The controller generates a dedicated command $v_i$ that drives the APIR
\begin{equation}\label{eq:apir}
\dot{u}_i
=p_{1,i}\left[\sigma_i(u_i)v_i-p_{2,i}u_i\right],
~~p_{1,i},p_{2,i}>0.
\end{equation}
The prescribed actuator set is $\cU_i=(\underline{u}_i,\overline{u}_i)$, where $\underline{u}_i<0<\overline{u}_i$, and
\begin{equation}\label{eq:sigma}
\sigma_i(u)=
\begin{cases}
1-\left(u/\overline{u}_i\right)^{\gamma_i},&u\geq0,\\
1-\left(u/\underline{u}_i\right)^{\gamma_i},&u<0,
\end{cases}
~\forall~\gamma_i\in2\mathbb{N}.
\end{equation}
The two branches in \eqref{eq:sigma} agree in value and first derivative at the origin, so $\sigma_i$ is continuously differentiable. At the upper boundary, $\dot{u}_i=-p_{1,i}p_{2,i}\overline{u}_i<0$; at the lower boundary, $\dot{u}_i=-p_{1,i}p_{2,i}\underline{u}_i>0$. Thus, under a bounded command, the APIR vector field points inward at both actuator boundaries and generates a compact invariant interior subset of $\cU_i$ \cite{KumarKumarSinha2026}. The subsequent analysis of the closed-loop system establishes the required command bound and a direction-dependent invariant actuator subset through a self-consistent argument.

Safe coordination also requires each output to remain within a prescribed interval $\cX_i(t)=(\underline{x}_i(t),\overline{x}_i(t))$. The design enforces output safety through the common core $\Omega(t)=\left(\xi_{\ell}(t),\xi_u(t)\right)\subseteq\bigcap_{i=1}^{N}\cX_i(t)$.
\begin{assumption}\label{ass:corridor}
The functions $\xi_{\ell},\xi_u\in C^2$ satisfy, for all $t\geq0$,
\begin{align}
0<\delta_{\xi}
\leq&~h(t):=\xi_u(t)-\xi_{\ell}(t)
\leq\Delta_{\xi},\label{eq:width}\\
|\xi_{\ell}(t)|+|\xi_u(t)|
\leq&~\bar{\xi},
~~
|\ddot{\xi}_{\ell}(t)|+|\ddot{\xi}_u(t)|
\leq\bar{s}_{\xi}.\label{eq:corridor_second}
\end{align}
Known nonnegative constants $\nu_{\ell}^{\pm}$ and $\nu_u^{\pm}$ satisfy
\begin{equation}\label{eq:corridor_velocity}
-\nu_{\ell}^{-}\leq\dot{\xi}_{\ell}(t)\leq\nu_{\ell}^{+},
~
-\nu_u^{-}\leq\dot{\xi}_u(t)\leq\nu_u^{+}.
\end{equation}
\end{assumption}
To represent the moving corridor in the consensus dynamics, we introduce the logarithmic barrier coordinate for $x_i\in\Omega(t)$,
\begin{equation}\label{eq:barrier}
z_i=\ln\!\left(\frac{x_i-\xi_{\ell}(t)}{\xi_u(t)-x_i}\right).
\end{equation}
\begin{lemma}\label{lem:barrier}
For each fixed $t\geq0$, the map in \eqref{eq:barrier} is a $C^1$ diffeomorphism from $\Omega(t)$ onto $\R$, with inverse $x_i=\varphi_t^{-1}(z_i)=\left(\xi_{\ell}(t)+\xi_u(t)e^{z_i}\right)/(1+e^{z_i})$. We write $q_{\ell,i}:=x_i-\xi_{\ell}$ and $q_{u,i}:=\xi_u-x_i$. Along \eqref{eq:plant}, the barrier-coordinate dynamics are $\dot{z}_i=b_i(x_i,t)u_i+\chi_i(x_i,t)$, where $b_i(x_i,t)=h(t)/(q_{\ell,i}q_{u,i})$ and $\chi_i(x_i,t)=-\dot{\xi}_{\ell}/q_{\ell,i}-\dot{\xi}_u/q_{u,i}$. Moreover, the inverse-map derivative satisfies $\left|\partial\varphi_t^{-1}(z)/\partial z\right|\leq\Delta_{\xi}/4$ for $z\in\R$ and $t\geq0$.
\end{lemma}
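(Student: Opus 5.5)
The plan is to argue directly from the explicit form of the map $\varphi_t(x)=\ln\bigl((x-\xi_{\ell}(t))/(\xi_u(t)-x)\bigr)$ on $\Omega(t)$, using $h(t)\ge\delta_{\xi}>0$ from \eqref{eq:width} so that $\Omega(t)$ is a nonempty open interval.

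\textbf{Diffeomorphism and inverse.} For fixed $t$, $\varphi_t$ is smooth on $\Omega(t)$, since both $q_{\ell,i}$ and $q_{u,i}$ are positive there. Its derivative is
\[
\partial\varphi_t/\partial x=1/q_{\ell,i}+1/q_{u,i}=h(t)/(q_{\ell,i}q_{u,i})>0,
\]
so $\varphi_t$ is strictly increasing. As $x\downarrow\xi_{\ell}$ it tends to $-\infty$, and as $x\uparrow\xi_u$ it tends to $+\infty$. By continuity it is therefore a bijection onto $\R$. Solving $e^{z}=(x-\xi_{\ell})/(\xi_u-x)$ for $x$ gives the stated inverse $x=(\xi_{\ell}+\xi_u e^{z})/(1+e^{z})$, which is smooth in $z$. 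Hence $\varphi_t$ is a $C^1$ (indeed $C^\infty$) diffeomorphism.

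\textbf{Barrier-coordinate dynamics.} Along \eqref{eq:plant}, $z_i=\ln q_{\ell,i}-\ln q_{u,i}$ with $\dot q_{\ell,i}=u_i-\dot{\xi}_{\ell}$ and $\dot q_{u,i}=\dot{\xi}_u-u_i$; here $\xi_{\ell},\xi_u\in C^2$ by \Cref{ass:corridor}. The chain rule then gives
\[
\dot z_i=(u_i-\dot{\xi}_{\ell})/q_{\ell,i}-(\dot{\xi}_u-u_i)/q_{u,i}.
\]
Collecting the $u_i$ terms yields $u_i\,h/(q_{\ell,i}q_{u,i})=b_iu_i$, and the remaining terms are exactly $\chi_i$.

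\textbf{Derivative bound.} Differentiating the inverse gives
\[
\partial\varphi_t^{-1}/\partial z=\frac{(\xi_u-\xi_{\ell})e^{z}}{(1+e^{z})^2}=h(t)\,s(1-s),
\]
where $s=e^{z}/(1+e^{z})\in(0,1)$. Since $s(1-s)\le 1/4$ and $h(t)\le\Delta_{\xi}$ by \eqref{eq:width}, the bound $\Delta_{\xi}/4$ follows uniformly in $z$ and $t$.

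None of the steps is a real obstacle. The only point needing care is the limits at the corridor endpoints, which give surjectivity; these rely on $h(t)>0$, so that both endpoints are distinct.
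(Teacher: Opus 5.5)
Your proposal is correct and follows essentially the same route as the paper: the positive derivative $h(t)/(q_{\ell,i}q_{u,i})$ and the endpoint limits give bijectivity, solving for $x$ gives the explicit inverse, the chain rule gives $\dot z_i=b_iu_i+\chi_i$, and the bound $h(t)e^{z}/(1+e^{z})^2\le\Delta_{\xi}/4$ completes the lemma. Your substitution $s=e^{z}/(1+e^{z})$ with $s(1-s)\le 1/4$ just spells out the elementary inequality that the paper leaves implicit.
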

\begin{proof}
For $x_i\in\Omega(t)$, the gaps $q_{\ell,i}$ and $q_{u,i}$ are positive, while \eqref{eq:width} ensures $h(t)>0$. Hence, differentiating \eqref{eq:barrier} with respect to $x_i$ yields the positive derivative $h(t)/(q_{\ell,i}q_{u,i})$. Further, \eqref{eq:barrier} also implies that $z_i$ tends to $-\infty$ and $+\infty$ at the lower and upper corridor boundaries, respectively. Strict monotonicity and the two endpoint limits establish bijectivity. Solving \eqref{eq:barrier} for $x_i$ yields the inverse map $\varphi_t^{-1}(z_i)=\left(\xi_{\ell}(t)+\xi_u(t)e^{z_i}\right)/(1+e^{z_i})$, whose denominator is positive. Differentiating \eqref{eq:barrier} along \eqref{eq:plant} yields $\dot{z}_i=b_i(x_i,t)u_i+\chi_i(x_i,t)$, where $b_i=h/(q_{\ell,i}q_{u,i})$ and $\chi_i=-\dot{\xi}_{\ell}/q_{\ell,i}-\dot{\xi}_u/q_{u,i}$. Differentiating $\varphi_t^{-1}$ and applying \eqref{eq:width} yield the global estimate $|\partial\varphi_t^{-1}(z)/\partial z|=h(t)e^z/(1+e^z)^2\leq\Delta_{\xi}/4$.
\end{proof}
The graph, actuator, and corridor models define the distributed safe-coordination problem.
\begin{problem}\label{prob:main}
Given a compact set $\cK\subset\Omega(0)^N\times\prod_{i=1}^{N}\cU_i$, we seek a distributed command $v_i$ and verifiable APC compatibility conditions such that every solution initialized in $\cK$ is complete\footnote{A solution is complete if its maximal interval of existence is $[0,\infty)$.}, satisfies $x_i(t)\in\Omega(t)$ and $u_i(t)\in\cU_i$ for all $t\geq0$, has bounded commands, and achieves $|x_i(t)-x_j(t)|\to0$.
\end{problem}
Finite actuator authority makes \Cref{prob:main} inherently regional in the sense that a compact subset of the admissible state--actuator domain is certifiable only when the graph-induced motion and corridor kinematics can be realized within the available authority. APC compatibility characterizes the certifiable compact subsets and verifies that APIR can realize the associated graph-induced motion and corridor kinematics.

\section{Main Results}\label{sec:main}
The proposed synthesis realizes a directed consensus field in barrier coordinates through the APIR dynamics. A compact-set compatibility certificate establishes safety of the resulting physical realization under the prescribed output and actuator constraints.
We define the directed consensus field
\begin{equation}\label{eq:beta}
\beta_i=-k\sum_{j=1}^{N}a_{ij}(z_i-z_j),
~~k>0,
\end{equation}
and the APIR error at the level of the realized barrier-coordinate velocity,
\begin{equation}\label{eq:e}
e_i=\dot{z}_i-\beta_i=b_i u_i+\chi_i-\beta_i.
\end{equation}
The error variable in \eqref{eq:e} aligns the physical realization layer with the network vector field. The commanded signal is chosen as
\begin{equation}\label{eq:controller}
v_i=
\frac{
 b_i p_{1,i}p_{2,i}u_i
 -\dot{b}_i u_i
 -\dot{\chi}_i
 +\dot{\beta}_i
 -c_i e_i
}{b_i p_{1,i}\sigma_i(u_i)},
~~c_i>0.
\end{equation}
Using the boundary gaps introduced in \Cref{lem:barrier}, differentiation of the coefficient functions $b_i$ and $\chi_i$ yields
\begin{align}
\dot{b}_i
=&~b_i\left(
\frac{\dot{h}}{h}
-\frac{u_i-\dot{\xi}_{\ell}}{q_{\ell,i}}
-\frac{\dot{\xi}_u-u_i}{q_{u,i}}
\right),\label{eq:bdot}\\
\dot{\chi}_i
=&~-\frac{\ddot{\xi}_{\ell}}{q_{\ell,i}}
+\frac{\dot{\xi}_{\ell}(u_i-\dot{\xi}_{\ell})}{q_{\ell,i}^{2}}
-\frac{\ddot{\xi}_u}{q_{u,i}}
+\frac{\dot{\xi}_u(\dot{\xi}_u-u_i)}{q_{u,i}^{2}}.\label{eq:chidot}
\end{align}
The derivative of the network field is evaluated without numerical differentiation, that is,
\begin{equation}\label{eq:betadot}
\dot{\beta}_i
=-k\sum_{j=1}^{N}a_{ij}
\left[(b_i u_i+\chi_i)-(b_j u_j+\chi_j)\right].
\end{equation}
Consequently, agent $i$ uses the local pair $(x_i,u_i)$, the pairs $(x_j,u_j)$ received from its in-neighbors, and the common corridor functions through their second derivatives. Neither $\boldsymbol{\pi}$ nor a graph eigenvalue is required online.
\begin{proposition}[Exact network--APIR cascade]\label{prop:cascade}
On every interval over which $x_i(t)\in\Omega(t)$ and $u_i(t)\in\cU_i$ for all agents, the closed-loop dynamics generated by \eqref{eq:plant}, \eqref{eq:apir}, and \eqref{eq:controller} satisfy
\begin{equation}\label{eq:cascade}
\dot{\mathbf{z}}=-k\cL\mathbf{z}+\mathbf{e},
~~
\dot{\mathbf{e}}=-\mathbf{C}\mathbf{e};
~~
\mathbf{C}=\diag\{c_1,\ldots,c_N\}.
\end{equation}
\end{proposition}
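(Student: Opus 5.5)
The proof is a direct computation carried out on an interval where every $x_i(t)\in\Omega(t)$ and every $u_i(t)\in\cU_i$. On such an interval, \Cref{lem:barrier} makes $z_i$ well defined and $C^1$, with $\dot z_i=b_iu_i+\chi_i$. The gaps $q_{\ell,i},q_{u,i}$ are positive, so $b_i>0$. Since $|u_i|$ stays strictly below $\overline{u}_i$ or $|\underline{u}_i|$, \eqref{eq:sigma} gives $\sigma_i(u_i)\in(0,1]$. Together with $p_{1,i}>0$, this shows that the denominator $b_ip_{1,i}\sigma_i(u_i)$ of \eqref{eq:controller} never vanishes, so $v_i$ is well defined and locally Lipschitz in the closed-loop state. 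I would record this first, because it is the only place the interval hypothesis is really used.

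The first component of \eqref{eq:cascade} comes straight from the definition \eqref{eq:e}. We have $\dot z_i=\beta_i+e_i$, and by \eqref{eq:beta} the stacked field is $\boldsymbol\beta=-k\cL\mathbf z$ because $\sum_j a_{ij}(z_i-z_j)=(\cL\mathbf z)_i$. Hence $\dot{\mathbf z}=-k\cL\mathbf z+\mathbf e$ holds identically.

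For the second component, I would differentiate $e_i=b_iu_i+\chi_i-\beta_i$ along trajectories:
\[
\dot e_i=\dot b_iu_i+b_i\dot u_i+\dot\chi_i-\dot\beta_i .
\]
Here $\dot b_i$ and $\dot\chi_i$ are the total time derivatives given in \eqref{eq:bdot} and \eqref{eq:chidot}; they follow from the chain rule using $\dot x_i=u_i$ and the $C^2$ corridor of \Cref{ass:corridor}. The term $\dot\beta_i$ is exactly \eqref{eq:betadot}, since $\dot z_j=b_ju_j+\chi_j$ for every $j$. Next I substitute \eqref{eq:apir} with \eqref{eq:controller}:
\[
b_i\dot u_i=b_ip_{1,i}\sigma_iv_i-b_ip_{1,i}p_{2,i}u_i=-\dot b_iu_i-\dot\chi_i+\dot\beta_i-c_ie_i .
\]
All terms cancel except $-c_ie_i$, so $\dot{\mathbf e}=-\mathbf C\mathbf e$.

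The main point needing care is consistency rather than difficulty. The $\dot b_i$, $\dot\chi_i$, and $\dot\beta_i$ appearing in the controller are algebraic expressions in the states $(x_i,u_i)$, the neighbor pairs $(x_j,u_j)$, and the corridor data. They coincide with the true time derivatives only because the plant relation $\dot x_i=u_i$ has been substituted into them. I would verify \eqref{eq:bdot} explicitly from $b_i=h/(q_{\ell,i}q_{u,i})$, using $\dot q_{\ell,i}=u_i-\dot\xi_\ell$ and $\dot q_{u,i}=\dot\xi_u-u_i$. I would treat \eqref{eq:chidot} the same way, so that the cancellation is exact.
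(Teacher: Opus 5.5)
Your proposal is correct and follows the paper's proof essentially step for step: you show the command is well defined from $b_i>0$ and $\sigma_i(u_i)>0$, obtain $\dot z_i=\beta_i+e_i$ and stack it via $\boldsymbol{\beta}=-k\cL\mathbf{z}$, then differentiate $e_i$ and substitute \eqref{eq:apir} and \eqref{eq:controller} so that everything cancels except $-c_ie_i$. Your explicit verification of \eqref{eq:bdot}--\eqref{eq:chidot} through $\dot q_{\ell,i}=u_i-\dot\xi_{\ell}$ and $\dot q_{u,i}=\dot\xi_u-u_i$ just fills in a computation the paper states without derivation.
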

\begin{proof}
On any interval satisfying $x_i(t)\in\Omega(t)$ and $u_i(t)\in\cU_i$, the positivity of $b_i(x_i,t)$ established in \Cref{lem:barrier} and the positivity of $\sigma_i(u_i)$ in \eqref{eq:sigma} make the command \eqref{eq:controller} well defined. Using $\dot{z}_i=b_i u_i+\chi_i$ and the derivative identities \eqref{eq:bdot}--\eqref{eq:betadot}, differentiation of \eqref{eq:e} yields $\dot{e}_i=\dot{b}_i u_i+b_i\dot{u}_i+\dot{\chi}_i-\dot{\beta}_i$. Substituting \eqref{eq:apir} and \eqref{eq:controller} into the differentiated form of \eqref{eq:e} yields $\dot{e}_i=-c_i e_i$. From \eqref{eq:e}, $\dot{z}_i=\beta_i+e_i$; stacking the identities and applying \eqref{eq:beta} establishes \eqref{eq:cascade}.
\end{proof}
The cascade in \eqref{eq:cascade} is the APC synthesis mechanism, wherein the directed field specifies the coordination objective, while the APIR error dynamics govern physical realization of the directed field. Safety requires the feedback inversion in \eqref{eq:controller} to remain uniformly conditioned and feasible within the available actuator authority. The compact-set bounds below certify uniform conditioning and actuator feasibility from the initial transformed state and realization mismatch.

For $(\mathbf{x}_0,\mathbf{u}_0)\in\cK$, we obtain $\mathbf{z}_0$ from \eqref{eq:barrier} and define
\begin{equation}\label{eq:e0}
\mathbf{e}_0
=\mathbf{B}(\mathbf{x}_0,0)\mathbf{u}_0
+\boldsymbol{\chi}(\mathbf{x}_0,0)
+k\cL\mathbf{z}_0,
\end{equation}
where $\mathbf{B}=\diag\{b_1,\ldots,b_N\}$. The initial mismatch in \eqref{eq:e0} and the initial transformed state determine the compact-set quantities. We define $Z_{\pi,\cK}:=\sup_{\cK}|\boldsymbol{\pi}^{\top}\mathbf{z}_0|$ and $Z_{\perp,\cK}:=\sup_{\cK}\|\boldsymbol{\Pi}\mathbf{z}_0\|$. We also define $E_{\cK}:=\sup_{\cK}\|\mathbf{e}_0\|$ and $E_{i,\cK}^{\pm}:=\sup_{\cK}\pos{\pm e_{i,0}}$. With $c_{\min}:=\min_i c_i$, we set $M_{\cK}:=Z_{\pi,\cK}+M_{\cG}Z_{\perp,\cK}+(\|\boldsymbol{\pi}\|+M_{\cG})E_{\cK}/c_{\min}$. We then define the output margin $m_{\cK}:=\delta_{\xi}/(1+e^{M_{\cK}})$, the corridor bounds $D_{\cK}^{\pm}:=(\nu_{\ell}^{\pm}+\nu_u^{\pm})/m_{\cK}$, and the direction-specific physical-input bounds $U_{i,\cK}^{\pm}:=\Delta_{\xi}(2k d_i^{\mathrm{in}}M_{\cK}+D_{\cK}^{\pm}+E_{i,\cK}^{\pm})/4$. These quantities yield the one-sided APC certificate in the following theorem.
\begin{theorem}[Directed networked APC]\label{thm:main}
Under \Cref{ass:graph,ass:corridor}, if the one-sided APC compatibility conditions
\begin{equation}\label{eq:compatibility}
U_{i,\cK}^{+}<\overline{u}_i,
~~
U_{i,\cK}^{-}<-\underline{u}_i,
~~i\in\mathcal{V},
\end{equation}
hold, then the closed-loop system defined by \eqref{eq:plant}, \eqref{eq:apir}, and \eqref{eq:controller} admits a unique complete solution for every initial condition in $\cK$. The solution satisfies, for all $t\geq0$,
\begin{align}
\xi_{\ell}(t)+m_{\cK}
\leq&~x_i(t)
\leq\xi_u(t)-m_{\cK},\label{eq:output_bound}\\
-U_{i,\cK}^{-}
\leq&~u_i(t)
\leq U_{i,\cK}^{+}.\label{eq:input_bound}
\end{align}
All commands $v_i$ are bounded. Moreover, for every $0<\rho<\min\{k\lambda_{\cG},c_{\min}\}$, there exists $K_{\rho}>0$ such that
\begin{equation}\label{eq:exp_consensus}
\|\mathbf{z}(t)-\one z_{\infty}\|
\leq K_{\rho}e^{-\rho t},
\end{equation}
where
\begin{equation}\label{eq:zinf}
z_{\infty}
=\boldsymbol{\pi}^{\top}\mathbf{z}_0
+\sum_{i=1}^{N}\frac{\pi_i e_{i,0}}{c_i}.
\end{equation}
In physical coordinates,
\begin{equation}\label{eq:physical}
x_i(t)-x_{\mathrm{c}}(t)\longrightarrow0,
~~
x_{\mathrm{c}}(t)
=\frac{\xi_{\ell}(t)+\xi_u(t)e^{z_{\infty}}}{1+e^{z_{\infty}}}.
\end{equation}
Consequently, \eqref{eq:controller} solves \Cref{prob:main} on every compact set satisfying \eqref{eq:compatibility}.
\end{theorem}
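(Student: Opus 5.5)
The proof will proceed by a self-consistent continuation (bootstrap) argument on the maximal solution, using the exact cascade of \Cref{prop:cascade}. For an initial condition in $\cK$, the right-hand side of \eqref{eq:plant}, \eqref{eq:apir}, \eqref{eq:controller} is locally Lipschitz on the open admissible domain $\{x_i\in\Omega(t),\,u_i\in\cU_i\}$. Here $b_i$, $\chi_i$ and $1/\sigma_i$ are $C^1$ there, and $\xi_\ell,\xi_u\in C^2$. Hence a unique maximal solution exists on some $[0,T)$. On $[0,T)$, \Cref{prop:cascade} gives $\mathbf{e}(t)=e^{-\mathbf{C}t}\mathbf{e}_0$ explicitly, and by variation of constants
\[
\mathbf{z}(t)=e^{-k\cL t}\mathbf{z}_0+\int_0^t e^{-k\cL(t-s)}e^{-\mathbf{C}s}\mathbf{e}_0\,ds .
\]
Splitting $e^{-k\cL\tau}=\mathbf{P}+(e^{-k\cL\tau}-\mathbf{P})$ and using \eqref{eq:semigroup} with $\tau\mapsto k\tau$, the $\mathbf{P}$-part contributes at most $|\boldsymbol{\pi}^\top\mathbf{z}_0|+\|\boldsymbol{\pi}\|E_\cK/c_{\min}$. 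The remainder contributes at most $M_\cG Z_{\perp,\cK}+M_\cG E_\cK/c_{\min}$, after bounding $\int_0^t e^{-k\lambda_\cG(t-s)}e^{-c_{\min}s}ds\le 1/c_{\min}$ and using $(e^{-k\cL t}-\mathbf{P})\mathbf{z}_0=(e^{-k\cL t}-\mathbf{P})\boldsymbol{\Pi}\mathbf{z}_0$. This yields the uniform estimate $\|\mathbf{z}(t)\|_\infty\le M_\cK$ on $[0,T)$, a bound that does not depend on $T$.

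Next I translate this into margins. From the inverse map in \Cref{lem:barrier}, $q_{\ell,i}=h e^{z_i}/(1+e^{z_i})\ge h/(1+e^{M_\cK})\ge m_\cK$, and similarly $q_{u,i}\ge m_\cK$, which is \eqref{eq:output_bound}. For the input, I write $u_i=(\beta_i+e_i-\chi_i)/b_i$ and use $1/b_i=q_{\ell,i}q_{u,i}/h\le h/4\le\Delta_\xi/4$. For the three pieces:
\begin{itemize}
\item the network field satisfies $|\beta_i|\le 2kd_i^{\mathrm{in}}M_\cK$;
\item the error satisfies $\pm e_i(t)\le E^\pm_{i,\cK}$, since $e_i$ decays monotonically without changing sign;
\item the corridor term gives $\pm(-\chi_i)=\pm(\dot\xi_\ell/q_{\ell,i}+\dot\xi_u/q_{u,i})\le(\nu_\ell^\pm+\nu_u^\pm)/m_\cK=D^\pm_\cK$ by \eqref{eq:corridor_velocity}.
\end{itemize}
Together these give \eqref{eq:input_bound}, so \eqref{eq:compatibility} places $u_i$ in a compact subset of $\cU_i$ with uniform distance to the boundary. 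Consequently $\sigma_i(u_i)$ is bounded below and $b_i$ is bounded above and below on $[0,T)$. By \eqref{eq:bdot}, \eqref{eq:chidot}, \eqref{eq:betadot} and \eqref{eq:corridor_second}, $v_i$ is then bounded. The trajectory therefore stays in a compact subset of the open domain, uniformly in $t$ (the corridor is bounded by $\bar\xi$). If $T<\infty$, the solution would have to leave every compact subset, so $T=\infty$ and the solution is complete.

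For convergence, I apply $\boldsymbol{\pi}^\top$ to the cascade: $\frac{d}{dt}\boldsymbol{\pi}^\top\mathbf{z}=\boldsymbol{\pi}^\top\mathbf{e}$. Integrating gives $\boldsymbol{\pi}^\top\mathbf{z}(t)\to z_\infty$ in \eqref{eq:zinf}, with tail $\sum_i\pi_i e_{i,0}e^{-c_it}/c_i$. The disagreement $\boldsymbol{\Pi}\mathbf{z}$ obeys $\frac{d}{dt}\boldsymbol{\Pi}\mathbf{z}=-k\cL\boldsymbol{\Pi}\mathbf{z}+\boldsymbol{\Pi}\mathbf{e}$, using $\boldsymbol{\Pi}\cL=\cL\boldsymbol{\Pi}=\cL$. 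Its convolution bound is $\int_0^t e^{-k\lambda_\cG(t-s)}e^{-c_{\min}s}ds\le C_\rho e^{-\rho t}$ for $\rho<\min\{k\lambda_\cG,c_{\min}\}$; the strict inequality absorbs the possible equal-rate case. Summing the two parts gives \eqref{eq:exp_consensus}. Finally, \eqref{eq:physical} follows from the mean value theorem and the Lipschitz bound $\Delta_\xi/4$ on $\varphi_t^{-1}$ in \Cref{lem:barrier}: $|x_i-x_{\mathrm c}|\le(\Delta_\xi/4)|z_i-z_\infty|$.

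I expect the main obstacle to be the bootstrap logic rather than any single estimate. The a priori bounds must be derived only on the interval where \Cref{prop:cascade} is valid, yet they must be uniform enough to exclude finite escape. The delicate point is making sure the $z$-bound uses the explicit cascade solution, not assumed invariance, so there is no circularity. The other delicate point is ensuring the compatibility margin separates $u_i$ strictly from the actuator boundary, where $\sigma_i$ vanishes and $v_i$ would blow up.
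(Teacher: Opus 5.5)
Your proposal is correct and follows essentially the same route as the paper's proof. Both arguments use the explicit cascade solution, the $\mathbf{P}$/remainder split giving the horizon-independent bound $|z_i|\le M_{\cK}$, output margins from the inverse barrier map, the signed reconstruction $u_i=(\beta_i-\chi_i+e_i)/b_i$ with $1/b_i\le\Delta_{\xi}/4$, continuation to $[0,\infty)$, and the $\boldsymbol{\pi}$-weighted collective mode plus a convolution estimate for exponential consensus. The differences are cosmetic: you work directly on the maximal interval in the open admissible domain instead of introducing a separate $T^{\star}$, and you bound $\boldsymbol{\Pi}\mathbf{z}$ through its own ODE rather than writing out the explicit disagreement decomposition.
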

\begin{proof}
We fix an arbitrary initial condition in the compact set specified in \Cref{prob:main}. Since $x_i(0)\in\Omega(0)$ and $u_i(0)\in\cU_i$, \Cref{lem:barrier} ensures $b_i(x_i(0),0)>0$, while \eqref{eq:sigma} ensures $\sigma_i(u_i(0))>0$. With $p_{1,i}>0$ in \eqref{eq:apir}, these inequalities make the denominator in \eqref{eq:controller} nonzero at $t=0$. By \Cref{ass:corridor}, the corridor functions are $C^2$. Hence, the closed-loop vector field defined by \eqref{eq:plant}, \eqref{eq:apir}, and \eqref{eq:controller}, with the auxiliary quantities in \eqref{eq:beta}--\eqref{eq:betadot}, is locally Lipschitz on the admissible open domain. A unique local solution then exists on a maximal interval $[0,T_{\max})$. We denote by $T^{\star}$ the supremum of the times for which $x_i(t)\in\Omega(t)$ and $u_i(t)\in\cU_i$ for every agent. The following estimates are first derived on $[0,T^{\star})$.

\emph{Step 1: Realization-error solution.}
The realization-error subsystem in \eqref{eq:cascade}, established by \Cref{prop:cascade}, with the initial mismatch in \eqref{eq:e0}, has the solution
\begin{equation}\label{eq:esol}
e_i(t)=e_{i,0}e^{-c_i t}.
\end{equation}
For the transformed-state subsystem in \eqref{eq:cascade}, variation of constants produces
\begin{equation}\label{eq:zsol}
\mathbf{z}(t)
=e^{-k\cL t}\mathbf{z}_0
+\int_{0}^{t}e^{-k\cL(t-s)}\mathbf{e}(s)\,ds.
\end{equation}
The realization-error solution \eqref{eq:esol} preserves the initial sign of each error and, by the definitions of $E_{\cK}$ and $c_{\min}$, implies $\|\mathbf{e}(t)\|\leq E_{\cK}e^{-c_{\min}t}$.

\emph{Step 2: Uniform transformed-state bound.}
To apply \eqref{eq:semigroup}, we define $\mathbf{R}(t):=e^{-k\cL t}-\mathbf{P}$. Under \Cref{ass:graph}, $\cL\one=\zero$ and $\boldsymbol{\pi}^{\top}\cL=\zero^{\top}$. With $\mathbf{P}=\one\boldsymbol{\pi}^{\top}$ and $\boldsymbol{\Pi}=\mathbf{I}_{N}-\mathbf{P}$, these identities imply $\mathbf{R}(t)\mathbf{P}=\zero$. Evaluating \eqref{eq:semigroup} at $\tau=kt$ bounds $\|\mathbf{R}(t)\|$ by $M_{\cG}e^{-k\lambda_{\cG}t}$. Substituting $\mathbf{I}_{N}=\mathbf{P}+\boldsymbol{\Pi}$ and $e^{-k\cL t}=\mathbf{P}+\mathbf{R}(t)$ into \eqref{eq:zsol}, followed by the triangle and induced-norm inequalities, results in
\begin{align}
|z_i(t)|
\leq&~|\boldsymbol{\pi}^{\top}\mathbf{z}_0|
+M_{\cG}\|\boldsymbol{\Pi}\mathbf{z}_0\|\nonumber\\
&+\int_{0}^{t}|\boldsymbol{\pi}^{\top}\mathbf{e}(s)|\,ds
+\int_{0}^{t}\|\mathbf{R}(t-s)\mathbf{e}(s)\|\,ds.\label{eq:zbound_intermediate}
\end{align}
Using \eqref{eq:esol} and the compact-set error radius $E_{\cK}$, the collective-mode integral in \eqref{eq:zbound_intermediate} is at most $\|\boldsymbol{\pi}\|E_{\cK}/c_{\min}$. Applying \eqref{eq:semigroup} to the disagreement integral in \eqref{eq:zbound_intermediate} and using the same realization-error bound limits that integral by $M_{\cG}E_{\cK}/c_{\min}$. Substitution of the collective-mode and disagreement-integral estimates and the compact-set bounds $Z_{\pi,\cK}$, $Z_{\perp,\cK}$, and $E_{\cK}$ into \eqref{eq:zbound_intermediate} bounds the right-hand side by $M_{\cK}$ and establishes
\begin{equation}\label{eq:zbound}
|z_i(t)|\leq M_{\cK},
~~t\in[0,T^{\star}),~i\in\mathcal{V}.
\end{equation}

\emph{Step 3: Strict output interiority.}
Solving \eqref{eq:barrier} for $x_i$ expresses the boundary distances as $x_i-\xi_{\ell}=h(t)e^{z_i}/(1+e^{z_i})$ and $\xi_u-x_i=h(t)/(1+e^{z_i})$. The width lower bound in \eqref{eq:width} and the transformed-state estimate \eqref{eq:zbound} place both distances above $\delta_{\xi}/(1+e^{M_{\cK}})=m_{\cK}$. Hence, \eqref{eq:output_bound} holds on $[0,T^{\star})$ and keeps every output uniformly separated from both singular boundaries of the barrier map \eqref{eq:barrier}.

\emph{Step 4: One-sided input bounds.}
The graph field in \eqref{eq:beta} and the transformed-state bound \eqref{eq:zbound} imply $|\beta_i(t)|\leq2k d_i^{\mathrm{in}}M_{\cK}$.
The coefficient identities for $\chi_i$ and $b_i$ in \Cref{lem:barrier} recover the physical input. The expression $\chi_i=-\dot{\xi}_{\ell}/q_{\ell,i}-\dot{\xi}_u/q_{u,i}$, the corridor-speed bounds in \eqref{eq:corridor_velocity}, and the output margins in \eqref{eq:output_bound} bound the corridor term as
\begin{equation}\label{eq:chi_bound}
-D_{\cK}^{-}
\leq-\chi_i(x_i,t)
\leq D_{\cK}^{+}.
\end{equation}
Applying the estimates from \eqref{eq:beta}, \eqref{eq:zbound}, and \eqref{eq:chi_bound} and the signed error bounds implied by \eqref{eq:esol} places the numerator $n_i:=\beta_i-\chi_i+e_i$ between $-4U_{i,\cK}^{-}/\Delta_{\xi}$ and $4U_{i,\cK}^{+}/\Delta_{\xi}$ by the definition of $U_{i,\cK}^{\pm}$. We solve \eqref{eq:e} for the physical input, invoke the barrier-gain identity in \Cref{lem:barrier}, and apply $q_{\ell,i}q_{u,i}\leq h(t)^2/4$ with \eqref{eq:width} to derive
\begin{equation}\label{eq:u_reconstruction}
u_i=\frac{n_i}{b_i},
~~
\frac{1}{b_i}
=\frac{(x_i-\xi_{\ell})(\xi_u-x_i)}{h(t)}
\leq\frac{\Delta_{\xi}}{4}.
\end{equation}
For $n_i\geq0$, \eqref{eq:u_reconstruction} and the upper numerator bound ensure $0\leq u_i\leq U_{i,\cK}^{+}$. When $n_i<0$, the lower numerator bound places $u_i$ in $[-U_{i,\cK}^{-},0)$. Hence \eqref{eq:input_bound} holds on $[0,T^{\star})$.

\emph{Step 5: APIR conditioning, bounded commands, and completeness.}
We introduce the actuator margins $\mu_i^{+}:=\overline{u}_i-U_{i,\cK}^{+}$ and $\mu_i^{-}:=-\underline{u}_i-U_{i,\cK}^{-}$. The compatibility conditions \eqref{eq:compatibility} make both margins positive, and \eqref{eq:input_bound} becomes
\begin{equation}\label{eq:input_strip}
\underline{u}_i+\mu_i^{-}
\leq u_i(t)
\leq\overline{u}_i-\mu_i^{+}.
\end{equation}
Applying the APIR factor \eqref{eq:sigma} on the strip \eqref{eq:input_strip} and using \eqref{eq:compatibility}, we define the positive lower bound
\begin{equation}\label{eq:sigma_lower}
\underline{\sigma}_{i,\cK}
:=\min\!\left\{
1-\left(\frac{U_{i,\cK}^{+}}{\overline{u}_i}\right)^{\gamma_i},
1-\left(\frac{U_{i,\cK}^{-}}{-\underline{u}_i}\right)^{\gamma_i}
\right\}>0.
\end{equation}
On the actuator strip \eqref{eq:input_strip}, the APIR factor \eqref{eq:sigma} satisfies $\sigma_i(u_i(t))\geq\underline{\sigma}_{i,\cK}$ for $t\in[0,T^{\star})$.

The identity for $1/b_i$ in \eqref{eq:u_reconstruction}, the corridor-width bounds in \eqref{eq:width}, and the output strip \eqref{eq:output_bound} bound the barrier gain as $4/\Delta_{\xi}\leq b_i(x_i,t)\leq\Delta_{\xi}/m_{\cK}^{2}$. The expressions in \eqref{eq:bdot} and \eqref{eq:chidot} remain bounded under \eqref{eq:width}, \eqref{eq:corridor_second}, \eqref{eq:corridor_velocity}, \eqref{eq:output_bound}, and \eqref{eq:input_bound}. The resulting bounds on $b_i$, $u_i$, and $\chi_i$ make every difference in \eqref{eq:betadot} bounded, so $\dot{\beta}_i$ is bounded. The realization-error law \eqref{eq:esol} and the bounds on $b_i$, $\dot{b}_i$, $\dot{\chi}_i$, and $\dot{\beta}_i$ control every numerator term in \eqref{eq:controller}. The APIR-factor lower bound \eqref{eq:sigma_lower} and the barrier-gain lower bound derived from \eqref{eq:u_reconstruction} and \eqref{eq:width} keep the denominator in \eqref{eq:controller} uniformly away from zero. Hence $v_i$ is bounded on $[0,T^{\star})$.

The corridor bound in \eqref{eq:corridor_second}, the output margins in \eqref{eq:output_bound}, and the actuator strip in \eqref{eq:input_strip} keep $(\mathbf{x},\mathbf{u})$ in a compact subset of the domain of the closed-loop vector field defined by \eqref{eq:plant}, \eqref{eq:apir}, and \eqref{eq:controller}, with the auxiliary quantities in \eqref{eq:beta}--\eqref{eq:betadot}. The strict margins in \eqref{eq:output_bound} and \eqref{eq:input_strip} preclude an output or actuator boundary from being reached at $T^{\star}$. The continuation theorem extends the solution for all forward time, so $T^{\star}=T_{\max}=\infty$ and the safety bounds hold globally.

\emph{Step 6: Directed-network agreement value and exponential consensus.}
We premultiply \eqref{eq:zsol} by $\boldsymbol{\pi}^{\top}$, invoke the stationarity relation in \Cref{ass:graph}, and integrate \eqref{eq:esol}. The collective motion then satisfies
\begin{equation}\label{eq:perron_motion}
\boldsymbol{\pi}^{\top}\mathbf{z}(t)
=\boldsymbol{\pi}^{\top}\mathbf{z}_0
+\sum_{i=1}^{N}\frac{\pi_i e_{i,0}}{c_i}
\left(1-e^{-c_i t}\right).
\end{equation}
Letting $t\to\infty$ in \eqref{eq:perron_motion} identifies the agreement value in \eqref{eq:zinf}. Using $\mathbf{P}=\one\boldsymbol{\pi}^{\top}$, $\boldsymbol{\Pi}=\mathbf{I}_{N}-\mathbf{P}$, and \eqref{eq:zinf}, we subtract $\one z_{\infty}$ from \eqref{eq:zsol} to form the disagreement decomposition
\begin{align}
\mathbf{z}(t)-\one z_{\infty}
=&~\mathbf{R}(t)\boldsymbol{\Pi}\mathbf{z}_0
+\int_{0}^{t}\mathbf{R}(t-s)\mathbf{e}(s)\,ds-\one\int_{t}^{\infty}\boldsymbol{\pi}^{\top}\mathbf{e}(s)\,ds.\label{eq:consensus_decomposition}
\end{align}
We set $\alpha_{\cG}=k\lambda_{\cG}$ and choose $0<\rho<\min\{\alpha_{\cG},c_{\min}\}$. By \eqref{eq:semigroup} and the definition of $Z_{\perp,\cK}$, the first term in \eqref{eq:consensus_decomposition} is bounded by $M_{\cG}Z_{\perp,\cK}e^{-\alpha_{\cG}t}$. The exponential error law \eqref{eq:esol} and $E_{\cK}$ bound the tail term in \eqref{eq:consensus_decomposition} by $\sqrt{N}\|\boldsymbol{\pi}\|E_{\cK}e^{-c_{\min}t}/c_{\min}$. The semigroup estimate \eqref{eq:semigroup} and the realization-error bound from \eqref{eq:esol} reduce the convolution term in \eqref{eq:consensus_decomposition} to $M_{\cG}E_{\cK}$ times a scalar convolution. Defining $a_{\rho}=\alpha_{\cG}-\rho$, $b_{\rho}=c_{\min}-\rho$, and $m_{\rho}=\min\{a_{\rho},b_{\rho}\}$, the scalar convolution satisfies
\begin{align*}
\int_0^t e^{-\alpha_{\cG}(t-s)}e^{-c_{\min}s}\,ds
=&e^{-\rho t}\int_0^t e^{-a_{\rho}(t-s)-b_{\rho}s}\,ds
\leq\frac{e^{-\rho t}}{m_{\rho}},
\end{align*}
where the last step uses $t e^{-m_{\rho}t}\leq1/m_{\rho}$. Substituting the homogeneous, convolution, and tail estimates into \eqref{eq:consensus_decomposition} proves \eqref{eq:exp_consensus} uniformly on $\cK$, with, for example,
\begin{equation}\label{eq:Krho}
K_{\rho}
=M_{\cG}Z_{\perp,\cK}
+\frac{M_{\cG}E_{\cK}}{m_{\rho}}
+\frac{\sqrt{N}\|\boldsymbol{\pi}\|E_{\cK}}{c_{\min}}.
\end{equation}
The constant in \eqref{eq:Krho} depends only on $Z_{\perp,\cK}$, $E_{\cK}$, and the selected rate. Finally, the mean-value theorem and the derivative estimate $|\partial\varphi_t^{-1}/\partial z|\leq\Delta_{\xi}/4$ from \Cref{lem:barrier} imply $|x_i(t)-x_{\mathrm{c}}(t)|\leq(\Delta_{\xi}/4)|z_i(t)-z_{\infty}|$. Exponential convergence in \eqref{eq:exp_consensus} then establishes \eqref{eq:physical}.
\end{proof}
The one-sided inequalities in \eqref{eq:compatibility} match each directional demand to the corresponding limit of the asymmetric actuator. A set-theoretic comparison with a symmetric inner-bound certificate quantifies the resulting reduction in conservatism. For fixed nonnegative demands $(U_i^{+},U_i^{-})$, we let $\mathcal{F}_i^{\mathrm{dir}}$ contain the actuator pairs $(\underline{u}_i,\overline{u}_i)$ satisfying $U_i^{+}<\overline{u}_i$ and $U_i^{-}<-\underline{u}_i$. We define $\mathcal{F}_i^{\mathrm{sym}}$ by the condition $\max\{U_i^{+},U_i^{-}\}<\min\{\overline{u}_i,-\underline{u}_i\}$.
\begin{proposition}\label{prop:set_inclusion}
For every $U_i^{+},U_i^{-}\geq0$, the feasible sets satisfy $\mathcal{F}_i^{\mathrm{sym}}\subseteq\mathcal{F}_i^{\mathrm{dir}}$. The inclusion is strict whenever $U_i^{+}\neq U_i^{-}$.
\end{proposition}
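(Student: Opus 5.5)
The plan is to compare the two defining inequalities directly, working with the admissible pairs $(\underline{u}_i,\overline{u}_i)$ subject to $\underline{u}_i<0<\overline{u}_i$.

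\emph{Inclusion.} Take $(\underline{u}_i,\overline{u}_i)\in\mathcal{F}_i^{\mathrm{sym}}$. Then
\begin{equation*}
U_i^{+}\leq\max\{U_i^{+},U_i^{-}\}<\min\{\overline{u}_i,-\underline{u}_i\}\leq\overline{u}_i,
\end{equation*}
and in the same way $U_i^{-}<-\underline{u}_i$. Hence the pair lies in $\mathcal{F}_i^{\mathrm{dir}}$. This step is immediate.

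\emph{Strictness.} When $U_i^{+}\neq U_i^{-}$, I will exhibit an explicit witness. Suppose first that $U_i^{+}>U_i^{-}\geq0$. Pick any $\varepsilon$ with $0<\varepsilon<U_i^{+}-U_i^{-}$, and set $\overline{u}_i:=U_i^{+}+\varepsilon$ and $\underline{u}_i:=-(U_i^{-}+\varepsilon)$.

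\begin{itemize}
\item The sign requirement holds: $\overline{u}_i>0$ because $U_i^{+}>0$, and $\underline{u}_i<0$ because $\varepsilon>0$.
\item The directional conditions hold strictly, since $U_i^{+}<\overline{u}_i$ and $U_i^{-}<-\underline{u}_i$. So the pair belongs to $\mathcal{F}_i^{\mathrm{dir}}$.
\item The symmetric condition fails. Because $\varepsilon<U_i^{+}-U_i^{-}$, we have $\min\{\overline{u}_i,-\underline{u}_i\}=U_i^{-}+\varepsilon<U_i^{+}=\max\{U_i^{+},U_i^{-}\}$. So the pair is not in $\mathcal{F}_i^{\mathrm{sym}}$.
\end{itemize}

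The case $U_i^{-}>U_i^{+}$ is symmetric: exchange the roles of the two bounds, using the witness $\overline{u}_i:=U_i^{+}+\varepsilon$ and $-\underline{u}_i:=U_i^{-}+\varepsilon$ with $\varepsilon<U_i^{-}-U_i^{+}$.

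\emph{Main obstacle.} There is no analytic difficulty. The only point needing care is the strict sign constraints $\underline{u}_i<0<\overline{u}_i$ when one of the demands equals zero. Adding $\varepsilon>0$ to both demands handles this automatically.
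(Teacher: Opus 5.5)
Your proof is correct and follows the paper's argument: the inclusion is the same direct chain of inequalities, and your witness $\overline{u}_i=U_i^{+}+\varepsilon$, $-\underline{u}_i=U_i^{-}+\varepsilon$ is one member of the family $U_i^{-}<-\underline{u}_i\leq U_i^{+}<\overline{u}_i$ that the paper uses to show strictness. Your explicit check of the sign constraints $\underline{u}_i<0<\overline{u}_i$ is a small detail the paper leaves implicit.
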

\begin{proof}
Membership in $\mathcal{F}_i^{\mathrm{sym}}$ requires $U_i^{+}<\overline{u}_i$ and $U_i^{-}<-\underline{u}_i$. These inequalities are the membership conditions for $\mathcal{F}_i^{\mathrm{dir}}$, establishing the stated inclusion. Suppose first that $U_i^{+}>U_i^{-}$. Any actuator pair satisfying $U_i^{-}<-\underline{u}_i\leq U_i^{+}<\overline{u}_i$ belongs to $\mathcal{F}_i^{\mathrm{dir}}$ but lies outside $\mathcal{F}_i^{\mathrm{sym}}$, because $\max\{U_i^{+},U_i^{-}\}=U_i^{+}\geq-\underline{u}_i$. The case $U_i^{-}>U_i^{+}$ follows by exchanging the positive and negative directions in the two set definitions.
\end{proof}
\begin{remark}\label{rem:tradeoff}
If $\mathbf{e}_0=\zero$, then \eqref{eq:zinf} reduces to the standard directed-network agreement value $\boldsymbol{\pi}^{\top}\mathbf{z}_0$. A nonzero APIR mismatch adds the signed shift $\Delta z_{\infty}:=\sum_{i=1}^{N}\pi_i e_{i,0}/c_i$. For fixed $\mathbf{e}_0$, raising $c_i$ decreases the magnitude of the individual contribution $\pi_i e_{i,0}/c_i$, although cancellation among the signed terms means that $|\Delta z_{\infty}|$ need not decrease. Increasing $c_{\min}$ also reduces the realization-error term in $M_{\cK}$ and accelerates the guaranteed APIR decay, while $c_i$ enters the command numerator through $c_i e_i$ in \eqref{eq:controller}. The nominal disagreement rate scales with $k$, but $k$ also changes $\mathbf{e}_0$ through \eqref{eq:e0} and enters the directional demand through $2k d_i^{\mathrm{in}}M_{\cK}$. Hence, the compatibility conditions must be reevaluated with the selected gains; the required actuator authority need not vary monotonically with $k$.
\end{remark}
\begin{remark}\label{rem:root}
Strong connectivity and weight balance are unnecessary in \Cref{thm:main}. If the directed spanning tree has a proper root strongly connected component, then $\pi_i=0$ outside the root component. By \eqref{eq:zinf}, the root component and its APIR transients determine the directed-network agreement value, while every remaining agent converges through the directed spanning tree.
\end{remark}
Under \Cref{thm:main}, the initial transformed state and APIR mismatch determine the collective trajectory. Partial pinning assigns a designer-selected safe trajectory without broadcasting a reference to every agent. For this extension, we assume that $\cG$ is strongly connected, so $\pi_i>0$. We set $\mathbf{G}=\diag\{g_1,\ldots,g_N\}\succeq\zero$ with $\sum_i g_i>0$, where $g_i>0$ identifies an informed agent with access to the constant transformed reference $z_{\mathrm{c}}$. We replace \eqref{eq:beta} by
\begin{equation}\label{eq:pinned_beta}
\beta_i
=-k\sum_{j=1}^{N}a_{ij}(z_i-z_j)
-\kappa g_i(z_i-z_{\mathrm{c}}),
~~\kappa>0.
\end{equation}
We define $\mathbf{Q}=\diag\{\pi_1,\ldots,\pi_N\}$ and
\begin{equation}
\mathbf{H}_{\mathrm{p}}
=k\frac{\mathbf{Q}\cL+\cL^{\top}\mathbf{Q}}{2}
+\kappa\mathbf{Q}\mathbf{G}.
\label{eq:Hp}
\end{equation}
We set $\lambda_{\mathrm{p}}:=\lambda_{\min}\!\left(\mathbf{Q}^{-1/2}\mathbf{H}_{\mathrm{p}}\mathbf{Q}^{-1/2}\right)$.
For every $\mathbf{w}\in\R^N$, the stationarity identity $\boldsymbol{\pi}^{\top}\cL=\zero^{\top}$ implies the Dirichlet representation
\begin{equation}\label{eq:directed_dirichlet}
\mathbf{w}^{\top}\frac{\mathbf{Q}\cL+\cL^{\top}\mathbf{Q}}{2}\mathbf{w}
=\frac{1}{2}\sum_{i=1}^{N}\sum_{j=1}^{N}
\pi_i a_{ij}(w_i-w_j)^2.
\end{equation}
Strong connectivity and $\pi_i>0$ render the quadratic form in \eqref{eq:directed_dirichlet} positive semidefinite with nullspace $\operatorname{span}\{\one\}$. The pinning form $\kappa\sum_i\pi_i g_iw_i^2$ is positive on every nonzero vector in the Laplacian nullspace $\operatorname{span}\{\one\}$. The Laplacian and pinning forms have no common nonzero null vector, so $\mathbf{H}_{\mathrm{p}}$ in \eqref{eq:Hp} is positive definite and $\lambda_{\mathrm{p}}>0$.
\begin{corollary}[Partially pinned APC]\label{cor:pinning}
We set $\boldsymbol{\zeta}=\mathbf{z}-z_{\mathrm{c}}\one$ and compute $\mathbf{e}_0$ from \eqref{eq:e} and \eqref{eq:pinned_beta}. Over $\cK$, we define the signed error radii $E_{i,\cK}^{\pm}:=\sup_{\cK}\pos{\pm e_{i,0}}$ and the energy radius $S_{\cK}:=\sup_{\cK}\left(\boldsymbol{\zeta}_0^{\top}\mathbf{Q}\boldsymbol{\zeta}_0+\mathbf{e}_0^{\top}\mathbf{Q}\mathbf{e}_0\right)^{1/2}$. We require $c_{\min}>1/(2\lambda_{\mathrm{p}})$ and set $\pi_{\min}:=\min_i\pi_i$, $R_{\cK}:=S_{\cK}/\sqrt{\pi_{\min}}$, and $M_{\cK}^{\mathrm{p}}:=|z_{\mathrm{c}}|+R_{\cK}$. We define $m_{\cK}^{\mathrm{p}}:=\delta_{\xi}/(1+e^{M_{\cK}^{\mathrm{p}}})$, $D_{\cK}^{\mathrm{p},\pm}:=(\nu_{\ell}^{\pm}+\nu_u^{\pm})/m_{\cK}^{\mathrm{p}}$, and $U_{i,\cK}^{\mathrm{p},\pm}:=\Delta_{\xi}((2k d_i^{\mathrm{in}}+\kappa g_i)R_{\cK}+D_{\cK}^{\mathrm{p},\pm}+E_{i,\cK}^{\pm})/4$. If $U_{i,\cK}^{\mathrm{p},+}<\overline{u}_i$ and $U_{i,\cK}^{\mathrm{p},-}<-\underline{u}_i$ for every agent, the pinned closed-loop system admits a unique complete solution for every initial condition in $\cK$, satisfies $\xi_{\ell}(t)+m_{\cK}^{\mathrm{p}}\leq x_i(t)\leq\xi_u(t)-m_{\cK}^{\mathrm{p}}$ and $-U_{i,\cK}^{\mathrm{p},-}\leq u_i(t)\leq U_{i,\cK}^{\mathrm{p},+}$ for all $t\geq0$, and has bounded commands. Furthermore, $z_i(t)\to z_{\mathrm{c}}$ and $x_i(t)-\varphi_t^{-1}(z_{\mathrm{c}})\to0$ exponentially, with $\varphi_t^{-1}$ defined in \Cref{lem:barrier}.
\end{corollary}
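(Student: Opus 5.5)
The plan reuses the architecture of the proof of \Cref{thm:main} and changes only the treatment of the transformed-state subsystem. The pinned field \eqref{eq:pinned_beta} still yields an exact cascade. The command \eqref{eq:controller} with $\dot{\beta}_i$ recomputed from \eqref{eq:pinned_beta} now carries the extra term $-\kappa g_i(b_iu_i+\chi_i)$, because $z_{\mathrm{c}}$ is constant. Repeating the computation of \Cref{prop:cascade} then gives, on every interval where $x_i(t)\in\Omega(t)$ and $u_i(t)\in\cU_i$,
\[
\dot{\boldsymbol{\zeta}}=-(k\cL+\kappa\mathbf{G})\boldsymbol{\zeta}+\mathbf{e},\qquad \dot{\mathbf{e}}=-\mathbf{C}\mathbf{e},
\]
where we use $\cL\one=\zero$. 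As before, $e_i(t)=e_{i,0}e^{-c_it}$, so the signed bounds $E_{i,\cK}^{\pm}$ persist. Local existence and the definition of $T^{\star}$ carry over verbatim.

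The central new step replaces the semigroup estimate \eqref{eq:semigroup} by a $\mathbf{Q}$-weighted energy argument. I would use $V=\boldsymbol{\zeta}^{\top}\mathbf{Q}\boldsymbol{\zeta}+\mathbf{e}^{\top}\mathbf{Q}\mathbf{e}$. Along the cascade, $\boldsymbol{\zeta}^{\top}\mathbf{Q}(k\cL+\kappa\mathbf{G})\boldsymbol{\zeta}=\boldsymbol{\zeta}^{\top}\mathbf{H}_{\mathrm{p}}\boldsymbol{\zeta}\geq\lambda_{\mathrm{p}}\boldsymbol{\zeta}^{\top}\mathbf{Q}\boldsymbol{\zeta}$ by the definition of $\lambda_{\mathrm{p}}$. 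Hence
\[
\dot V\leq-2\lambda_{\mathrm{p}}\boldsymbol{\zeta}^{\top}\mathbf{Q}\boldsymbol{\zeta}+2\boldsymbol{\zeta}^{\top}\mathbf{Q}\mathbf{e}-2c_{\min}\mathbf{e}^{\top}\mathbf{Q}\mathbf{e}.
\]
Young's inequality gives $2\boldsymbol{\zeta}^{\top}\mathbf{Q}\mathbf{e}\leq\lambda_{\mathrm{p}}\boldsymbol{\zeta}^{\top}\mathbf{Q}\boldsymbol{\zeta}+\lambda_{\mathrm{p}}^{-1}\mathbf{e}^{\top}\mathbf{Q}\mathbf{e}$. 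The result is $\dot V\leq-\lambda_{\mathrm{p}}\boldsymbol{\zeta}^{\top}\mathbf{Q}\boldsymbol{\zeta}-(2c_{\min}-1/\lambda_{\mathrm{p}})\mathbf{e}^{\top}\mathbf{Q}\mathbf{e}$, and this is exactly where the hypothesis $c_{\min}>1/(2\lambda_{\mathrm{p}})$ enters. Therefore $\dot V\leq-\eta V$ with $\eta=\min\{\lambda_{\mathrm{p}},2c_{\min}-1/\lambda_{\mathrm{p}}\}>0$, and $V(t)\leq V(0)\leq S_{\cK}^2$.

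Since $\pi_{\min}\zeta_i^2\leq V$, we get $|\zeta_i(t)|\leq R_{\cK}$ and $|z_i(t)|\leq|z_{\mathrm{c}}|+R_{\cK}=M_{\cK}^{\mathrm{p}}$ on $[0,T^{\star})$. I expect this energy step to be the main obstacle. The point to check is that the weighting by $\mathbf{Q}$ makes the cross term and the pinning term interact correctly; the symmetric part $(\mathbf{Q}\cL+\cL^{\top}\mathbf{Q})/2$ is precisely what \eqref{eq:Hp} and \eqref{eq:directed_dirichlet} supply.

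The remaining steps mirror Steps 3--5 of \Cref{thm:main}.

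\begin{itemize}
\item \emph{Output margin.} The bound $|z_i|\leq M_{\cK}^{\mathrm{p}}$ gives the output margin $m_{\cK}^{\mathrm{p}}$ via the inverse map in \Cref{lem:barrier}, and hence the corridor bounds $D_{\cK}^{\mathrm{p},\pm}$ as in \eqref{eq:chi_bound}.
\item \emph{Field bound.} For the pinned field, I would write $z_i-z_j=\zeta_i-\zeta_j$ and $z_i-z_{\mathrm{c}}=\zeta_i$. This gives $|\beta_i|\leq(2kd_i^{\mathrm{in}}+\kappa g_i)R_{\cK}$, which is why $R_{\cK}$ rather than $M_{\cK}^{\mathrm{p}}$ appears in $U_{i,\cK}^{\mathrm{p},\pm}$.
\item \emph{Input bounds.} The reconstruction $u_i=n_i/b_i$ with $1/b_i\leq\Delta_{\xi}/4$ then yields the one-sided input bounds.
\item \emph{Completeness and bounded commands.} The positive actuator margins give a lower bound on $\sigma_i$ as in \eqref{eq:sigma_lower}, bounded $\dot{b}_i$, $\dot{\chi}_i$ and $\dot{\beta}_i$, bounded commands, and confinement to a compact subset of the domain. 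The continuation theorem then yields $T^{\star}=\infty$.
\end{itemize}

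Finally, $V(t)\leq S_{\cK}^2e^{-\eta t}$ gives $|z_i(t)-z_{\mathrm{c}}|\leq R_{\cK}e^{-\eta t/2}$, which is exponential convergence. The mean-value estimate $|\partial\varphi_t^{-1}/\partial z|\leq\Delta_{\xi}/4$ from \Cref{lem:barrier} then transfers this to $|x_i(t)-\varphi_t^{-1}(z_{\mathrm{c}})|\leq(\Delta_{\xi}/4)|z_i(t)-z_{\mathrm{c}}|$.
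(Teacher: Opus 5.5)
Your proposal is correct and matches the paper's proof essentially step for step. It uses the same reduced cascade for $\boldsymbol{\zeta}$ (including the extra $-\kappa g_i\dot{z}_i$ term in $\dot{\beta}_i$), the same $\mathbf{Q}$-weighted Lyapunov function (your $V$ is twice the paper's $W$), the same Young's inequality giving the rate $\min\{\lambda_{\mathrm{p}},2c_{\min}-1/\lambda_{\mathrm{p}}\}$, and the same bound $\|\boldsymbol{\zeta}(t)\|\leq R_{\cK}$ passed through the output-margin, field-bound, input-reconstruction, and continuation steps of \Cref{thm:main}; your explicit estimate $|z_i(t)-z_{\mathrm{c}}|\leq R_{\cK}e^{-\eta t/2}$ is just the paper's norm-equivalence remark made concrete.
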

\begin{proof}
Applying the realization-error calculation in \Cref{prop:cascade} to the pinned field \eqref{eq:pinned_beta}, with $\boldsymbol{\zeta}=\mathbf{z}-z_{\mathrm{c}}\one$, reduces the pinned closed-loop dynamics to
\begin{equation}\label{eq:pinned_reduced}
\dot{\boldsymbol{\zeta}}
=-(k\cL+\kappa\mathbf{G})\boldsymbol{\zeta}+\mathbf{e},
~~
\dot{\mathbf{e}}=-\mathbf{C}\mathbf{e}.
\end{equation}
We consider the Lyapunov function $W=(\boldsymbol{\zeta}^{\top}\mathbf{Q}\boldsymbol{\zeta}+\mathbf{e}^{\top}\mathbf{Q}\mathbf{e})/2$.
We differentiate $W$ along \eqref{eq:pinned_reduced}, use $\mathbf{Q}=\diag\{\pi_1,\ldots,\pi_N\}$ and the definition of $\mathbf{C}$ in \eqref{eq:cascade}, and invoke \eqref{eq:Hp} and the definition of $\lambda_{\mathrm{p}}$ to bound $\dot{W}$ by
\begin{equation}\label{eq:pinned_dissipation}
\dot{W}
\leq-\lambda_{\mathrm{p}}\boldsymbol{\zeta}^{\top}\mathbf{Q}\boldsymbol{\zeta}
+\boldsymbol{\zeta}^{\top}\mathbf{Q}\mathbf{e}
-c_{\min}\mathbf{e}^{\top}\mathbf{Q}\mathbf{e}.
\end{equation}
Applying Young's inequality to the cross term in \eqref{eq:pinned_dissipation} reduces the derivative estimate to
\begin{align}
\dot{W}
\leq&~-\frac{\lambda_{\mathrm{p}}}{2}
\boldsymbol{\zeta}^{\top}\mathbf{Q}\boldsymbol{\zeta}-\left(c_{\min}-\frac{1}{2\lambda_{\mathrm{p}}}\right)
\mathbf{e}^{\top}\mathbf{Q}\mathbf{e}.
\label{eq:pinned_Wdot}
\end{align}
We set $\eta_{\mathrm{p}}:=\min\{\lambda_{\mathrm{p}},2c_{\min}-1/\lambda_{\mathrm{p}}\}$. The condition $c_{\min}>1/(2\lambda_{\mathrm{p}})$ makes $\eta_{\mathrm{p}}>0$, and \eqref{eq:pinned_Wdot} implies $\dot{W}\leq-\eta_{\mathrm{p}}W$ and $W(t)\leq W(0)e^{-\eta_{\mathrm{p}}t}$. The definition of $S_{\cK}$ bounds $W(0)$ by $S_{\cK}^2/2$. Since $\boldsymbol{\zeta}^{\top}\mathbf{Q}\boldsymbol{\zeta}\geq\pi_{\min}\|\boldsymbol{\zeta}\|^2$ and $R_{\cK}=S_{\cK}/\sqrt{\pi_{\min}}$, we have $\|\boldsymbol{\zeta}(t)\|\leq R_{\cK}$. Norm equivalence converts the energy decay rate $\eta_{\mathrm{p}}$ into the state-norm rate $\eta_{\mathrm{p}}/2$ for $(\boldsymbol{\zeta},\mathbf{e})$.

From $\mathbf{z}=z_{\mathrm{c}}\one+\boldsymbol{\zeta}$ and $\|\boldsymbol{\zeta}(t)\|\leq R_{\cK}$, each barrier coordinate satisfies $|z_i|\leq M_{\cK}^{\mathrm{p}}$. On the ball $\|\boldsymbol{\zeta}\|\leq R_{\cK}$, the pinned field \eqref{eq:pinned_beta} obeys $|\beta_i|\leq(2k d_i^{\mathrm{in}}+\kappa g_i)R_{\cK}$. The inverse barrier relations from \eqref{eq:barrier}, the width condition \eqref{eq:width}, and $|z_i|\leq M_{\cK}^{\mathrm{p}}$ certify the margins $x_i-\xi_{\ell}\geq m_{\cK}^{\mathrm{p}}$ and $\xi_u-x_i\geq m_{\cK}^{\mathrm{p}}$. With $D_{\cK}^{\mathrm{p},\pm}=(\nu_{\ell}^{\pm}+\nu_u^{\pm})/m_{\cK}^{\mathrm{p}}$, the corridor term satisfies $-D_{\cK}^{\mathrm{p},-}\leq-\chi_i\leq D_{\cK}^{\mathrm{p},+}$ by \eqref{eq:chi_bound}.

Substitution of the graph-field and corridor estimates and the sign-preserving identity $e_i(t)=e_{i,0}e^{-c_i t}$ from \eqref{eq:pinned_reduced} into \eqref{eq:u_reconstruction}, with the signed radii $E_{i,\cK}^{\pm}$, restricts the physical input to $-U_{i,\cK}^{\mathrm{p},-}\leq u_i\leq U_{i,\cK}^{\mathrm{p},+}$. By strict compatibility, $[-U_{i,\cK}^{\mathrm{p},-},U_{i,\cK}^{\mathrm{p},+}]\subset\operatorname{int}{(\cU_i)}$. Applying the endpoint calculation in \eqref{eq:sigma_lower} with $U_{i,\cK}^{\mathrm{p},\pm}$ establishes a positive lower bound for $\sigma_i$ throughout this interval via \eqref{eq:sigma}. Under the pinned margins and \eqref{eq:width}--\eqref{eq:corridor_velocity}, all terms in \eqref{eq:bdot}--\eqref{eq:chidot} remain uniformly bounded. By \eqref{eq:pinned_Wdot}, $\boldsymbol{\zeta}$ and $\mathbf{e}$ remain in compact balls, so \eqref{eq:pinned_beta} also keeps $\dot{z}_i=\beta_i+e_i$ bounded. Differentiating \eqref{eq:pinned_beta} introduces the bounded pinning term $-\kappa g_i\dot{z}_i$ in addition to the network-field derivative \eqref{eq:betadot}. The estimates for $b_i$, $u_i$, $e_i$, $\dot{b}_i$, $\dot{\chi}_i$, and $\dot{\beta}_i$ control the numerator of \eqref{eq:controller}. The controller denominator remains uniformly positive by $b_i\geq4/\Delta_{\xi}$ from \eqref{eq:u_reconstruction} and the APIR lower bound from \eqref{eq:sigma} and \eqref{eq:sigma_lower}. Hence, $v_i$ is bounded. The corridor regularity bound \eqref{eq:corridor_second} and the pinned margins confine $(\mathbf{x},\mathbf{u})$ to a compact subset of the domain of the closed-loop vector field defined by \eqref{eq:plant}, \eqref{eq:apir}, and \eqref{eq:controller}; the continuation theorem extends the solution to $[0,\infty)$ and proves completeness. The state-norm decay from \eqref{eq:pinned_Wdot} proves $\mathbf{z}\to z_{\mathrm{c}}\one$ exponentially, and the inverse-map derivative bound in \Cref{lem:barrier} transfers exponential convergence to the physical coordinates.
\end{proof}

\section{Simulations}\label{sec:numerics}
\begin{figure}[h!]
    \centering
    \begin{subfigure}[b]{0.47\linewidth}
        \centering
        \resizebox{0.9\linewidth}{!}{%
        \begin{tikzpicture}[
            >=Latex,
            vertex/.style={
                circle,
                draw=black,
                thick,
                minimum size=8mm,
                inner sep=0pt,
                font=\small\bfseries
            },
            edge/.style={
                ->,
                thick,
                shorten >=2pt,
                shorten <=2pt
            },
            weight/.style={
                fill=white,
                inner sep=1pt,
                font=\scriptsize
            }
        ]

        \node[vertex, fill=blue!20]   (v1) at (90:2.2cm)   {1};
        \node[vertex, fill=red!20]    (v2) at (18:2.2cm)   {2};
        \node[vertex, fill=green!20]  (v3) at (-54:2.2cm)  {3};
        \node[vertex, fill=orange!25] (v4) at (-126:2.2cm) {4};
        \node[vertex, fill=purple!20] (v5) at (162:2.2cm)  {5};

        \draw[edge] (v1) -- node[weight] {$1.2$} (v2);
        \draw[edge] (v1) -- node[weight] {$0.4$} (v3);
        \draw[edge] (v2) -- node[weight] {$0.9$} (v3);
        \draw[edge] (v3) -- node[weight] {$1.4$} (v4);
        \draw[edge] (v2) -- node[weight] {$0.5$} (v5);
        \draw[edge] (v4) -- node[weight] {$0.8$} (v5);
        \draw[edge] (v5) -- node[weight] {$0.7$} (v1);

        \end{tikzpicture}%
        }

        \caption{Strongly connected ($\mathcal{G}_1$)}
        \label{fig:strongly_connected1}
    \end{subfigure}
    \begin{subfigure}[b]{0.47\linewidth}
        \centering
        \resizebox{0.9\linewidth}{!}{%
\begin{tikzpicture}[
    >=Latex,
    vertex/.style={
        circle,
        draw=black,
        thick,
        minimum size=8mm,
        inner sep=0pt,
        font=\small\bfseries
    },
    edge/.style={
        ->,
        thick,
        shorten >=2pt,
        shorten <=2pt
    },
    weight/.style={
        fill=white,
        inner sep=1pt,
        font=\scriptsize
    }
]

\node[vertex, fill=blue!20]   (v1) at (90:2.2cm)   {1};
\node[vertex, fill=red!20]    (v2) at (18:2.2cm)   {2};
\node[vertex, fill=green!20]  (v3) at (-54:2.2cm)  {3};
\node[vertex, fill=orange!25] (v4) at (-126:2.2cm) {4};
\node[vertex, fill=purple!20] (v5) at (162:2.2cm)  {5};

\draw[edge] (v2) -- node[weight] {$0.9$} (v1);

\draw[edge] (v1) -- node[weight] {$0.8$} (v3);
\draw[edge] (v2) -- node[weight] {$1.2$} (v3);
\draw[edge] (v4) -- node[weight] {$1.1$} (v3);

\draw[edge] (v2) -- node[weight] {$0.8$} (v4);

\draw[edge] (v2) -- node[weight] {$0.5$} (v5);
\draw[edge] (v4) -- node[weight] {$0.5$} (v5);

\end{tikzpicture}
        }

        \caption{Directed spanning tree ($\mathcal{G}_2$)}
        \label{fig:spanning_tree}
    \end{subfigure}

    \caption{Interaction topologies.}
    \label{fig:graphs}

\end{figure}
To demonstrate the effectiveness of the proposed framework, we consider a network of $N=5$ single-integrator agents. Three representative scenarios are examined.  The first scenario considers agents interacting over the strongly connected directed graph $\mathcal{G}_1$ shown in \Cref{fig:strongly_connected1}. The design parameters are chosen as $k=0.14$, $p_{1,i}=1$, $\gamma_i=2$, $\mathbf{c}=[8,\,6.5,\,7,\,6.2,\,6.8]^{\top}$, and
$\mathbf{p}_2=[0.50,\,0.55,\,0.45,\,0.50,\,0.60]^{\top}$, with the actuator saturation intervals $\mathcal{U}_1=(-0.94,1.2)$,
$\mathcal{U}_2=(-1.5,1.36)$, $\mathcal{U}_3=(-1.45,1.54)$, $\mathcal{U}_4=(-1.71,1.56)$, and $\mathcal{U}_5=(-1.45,1.49)$. Thus, each agent has a different admissible input interval, with asymmetric positive and negative actuation limits.The agent and actuator states are initialized as $\mathbf{x}(0)=[0.49,\,0.55,\,1.2,\,0.78,\,0.89]^{\top}$ and $\mathbf{u}(0)=[0.8,\,-0.8,\,-0.52,\,-0.6,\,0.4]^{\top}$, respectively, and the parameters $\lambda_G$ and $M_\mathcal{G}$ are set as $0.5$ and $1.2$ respectively. The time-varying performance corridor is characterized by the center $c_{\xi}(t)=0.18\sin(0.16t)+0.05\cos(0.05t)$, with the lower and upper half-widths $\Delta_{\ell}(t)=1.45+0.12\sin(0.10t+0.4)$ and $\Delta_u(t)=1.95+0.16\cos(0.08t-0.2)$, respectively, yielding the corridor bounds $\xi_{\ell}=c_{\xi}-\Delta_{\ell}$ and $\xi_u=c_{\xi}+\Delta_u$.  
\begin{figure*}[h!]
    \centering
    \begin{subfigure}[t]{0.32\linewidth}
        \centering
        \includegraphics[width=\linewidth]{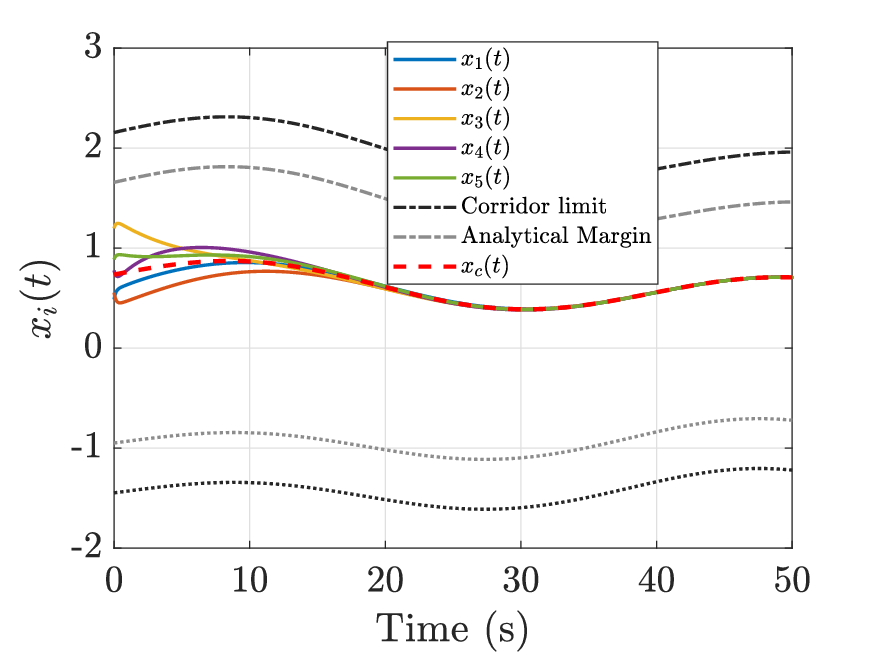}
        \caption{State trajectories $x_i(t)$.}
        \label{fig:x1}
    \end{subfigure}
    \begin{subfigure}[t]{0.32\linewidth}
        \centering
        \includegraphics[width=\linewidth]{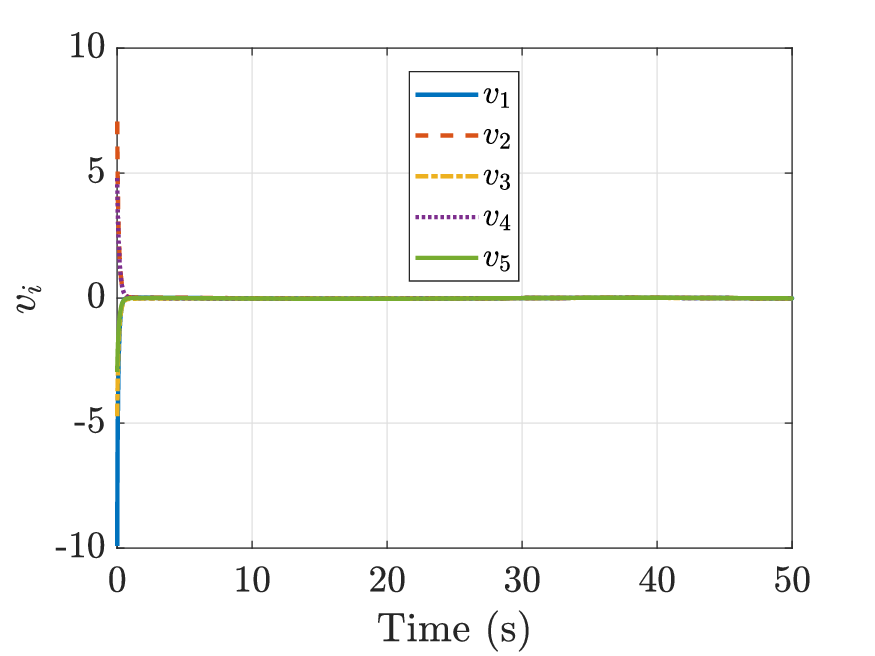}
        \caption{Commanded actuator signals $v_i(t)$.}
        \label{fig:v1}
    \end{subfigure}
    \begin{subfigure}[t]{0.32\linewidth}
        \centering
        \includegraphics[width=\linewidth]{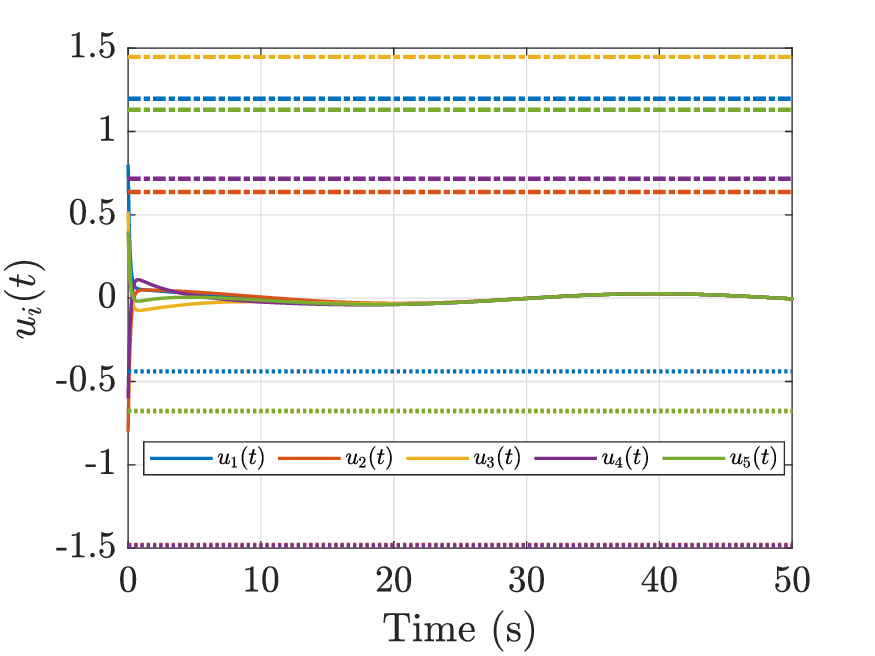}
        \caption{Inputs $u_i(t)$ delivered to the plant.}
        \label{fig:input_ana1}
    \end{subfigure}
    \caption{Performance of the proposed network APIR for strongly connected digraphs.}
    \label{fig:strongly_connected}
\end{figure*}
\begin{figure*}[h!]
    \centering

    \begin{subfigure}[t]{0.32\linewidth}
        \centering
        \includegraphics[width=\linewidth]{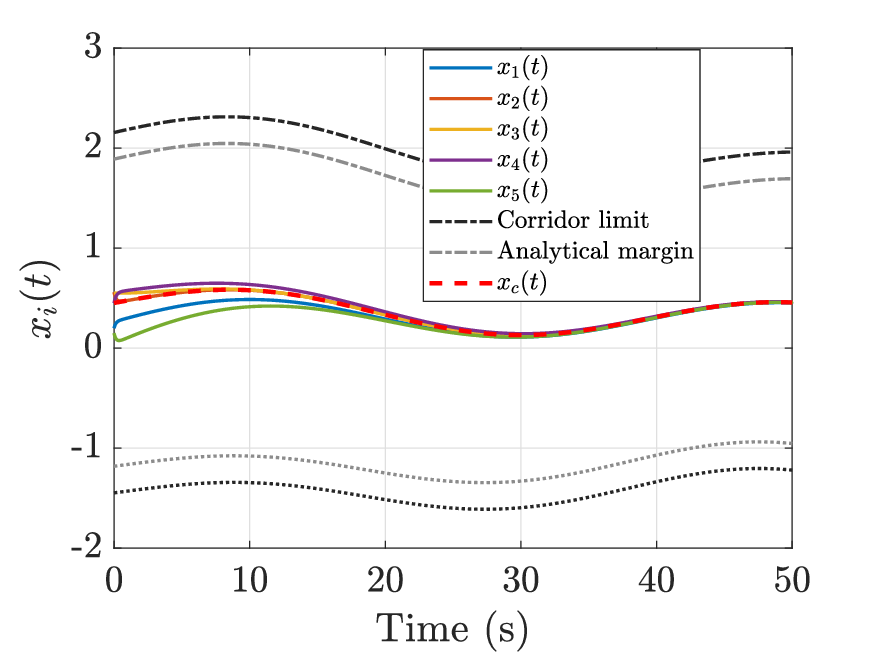}
        \caption{State trajectories $x_i(t)$.}
        \label{fig:x2}
    \end{subfigure}
    \begin{subfigure}[t]{0.32\linewidth}
        \centering
        \includegraphics[width=\linewidth]{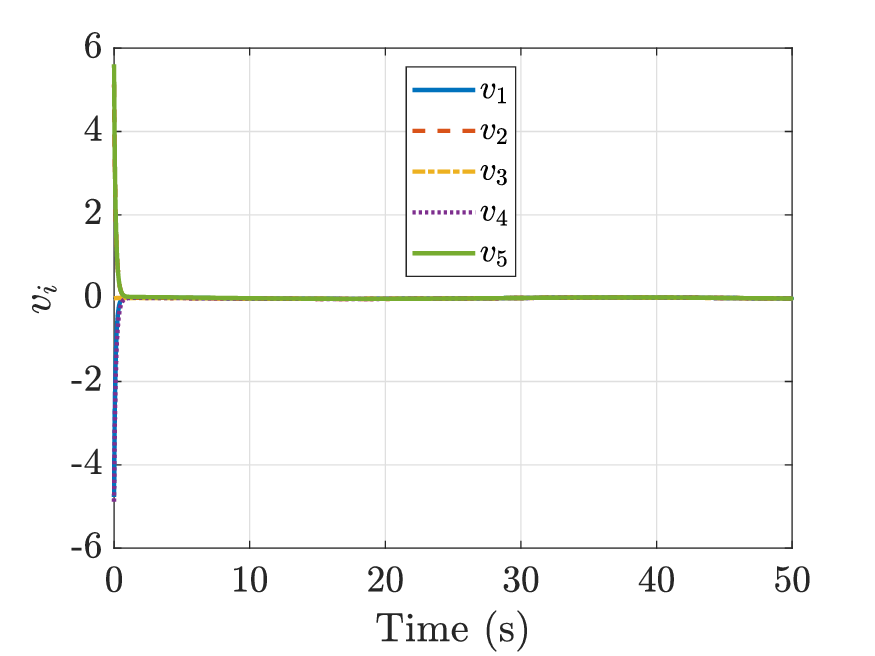}
        \caption{Commanded actuator signals $v_i(t)$.}
        \label{fig:v2}
    \end{subfigure}
    \begin{subfigure}[t]{0.32\linewidth}
        \centering
        \includegraphics[width=\linewidth]{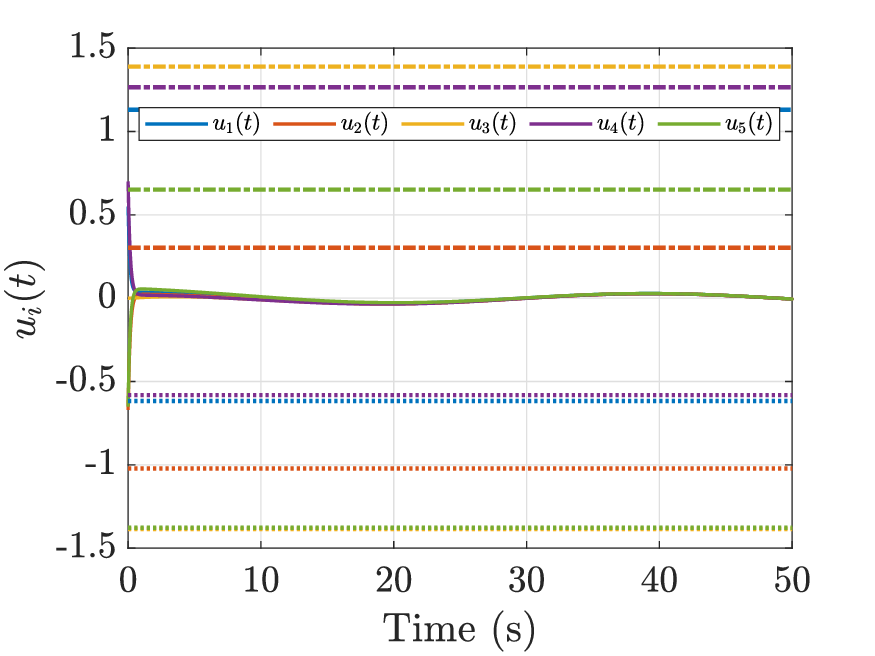}
        \caption{Inputs $u_i(t)$ delivered to the plant.}
        \label{fig:input_ana2}
    \end{subfigure}
    \caption{Performance of the proposed network APIR for digraphs with only a spanning tree.}
    \label{fig:spanning_tree_results}
\end{figure*}
\begin{figure*}[h!]
    \centering
    \begin{subfigure}[t]{0.32\linewidth}
        \centering
        \includegraphics[width=\linewidth]{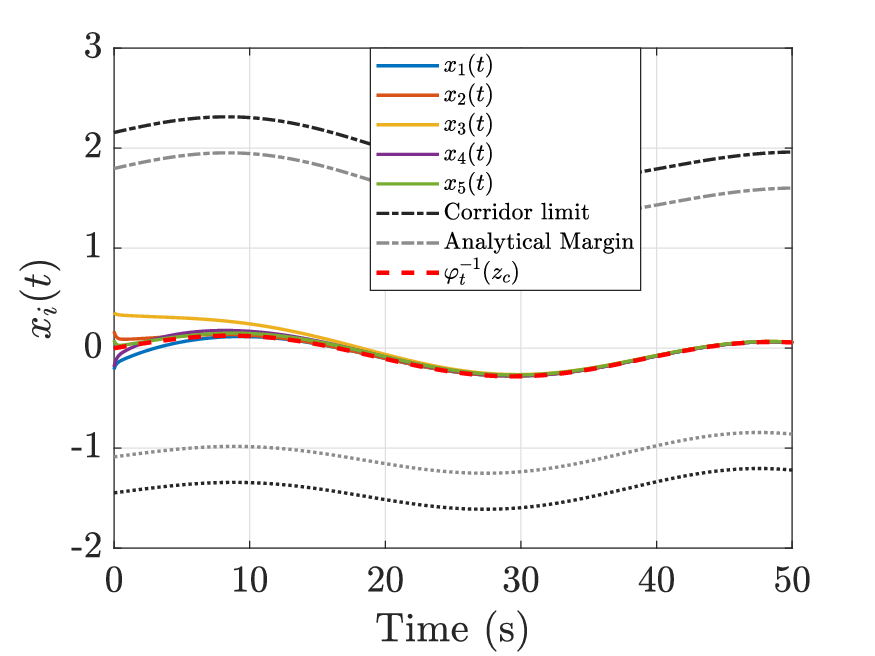}
        \caption{State trajectories $x_i(t)$.}
        \label{fig:x3}
    \end{subfigure}
    \begin{subfigure}[t]{0.32\linewidth}
        \centering
        \includegraphics[width=\linewidth]{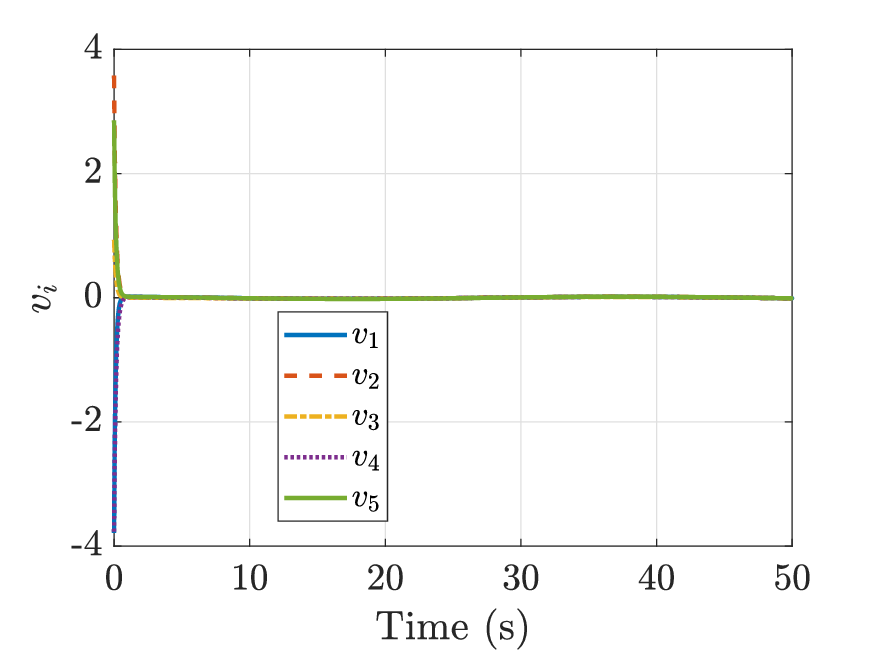}
        \caption{Commanded actuator signals $v_i(t)$.}
        \label{fig:v3}
    \end{subfigure}
    \begin{subfigure}[t]{0.32\linewidth}
        \centering
        \includegraphics[width=\linewidth]{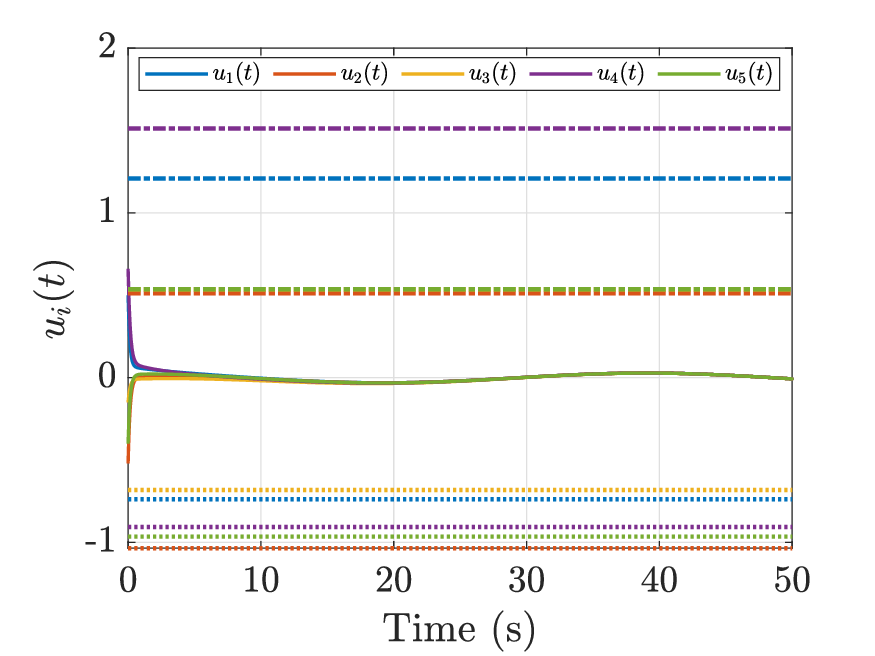}
        \caption{Inputs $u_i(t)$ delivered to the plant.}
        \label{fig:input_ana3}
    \end{subfigure}
    \caption{Performance of the proposed network APIR under partial pinning.}
    \label{fig:pinning_results}
\end{figure*}

As depicted in \Cref{fig:x1}, the agents' state trajectories remain strictly
within the time-varying safety corridor and converge to the common agreement trajectory $x_c(t)$, consistent with the
consensus result established in \Cref{thm:main}. \Cref{fig:v1} shows
that the commanded actuator signals $v_i(t)$ remain uniformly bounded during
the transient phase. The actuator compatibility margins
$({\mu}_i^{+},{\mu}_i^{-})$ are computed as $(0.0137,\ 0.496)$,
$(0.727,\ 0.0048)$, $(0.092,\ 0.77)$, $(0.84,\ 0.22)$, and
$(0.365,\ 0.775)$ for agents $1$ through $5$, respectively, all of which
are strictly positive and thus satisfy the actuator compatibility condition
of \Cref{thm:main}. Correspondingly, \Cref{fig:input_ana1} confirms that the
realized physical inputs $u_i(t)$ remain strictly within the analytically proven bounds $-U_{i,\mathcal{K}}^{-} \le u_i(t) \le U_{i,\mathcal{K}}^{+}$
even though $v_i(t)$ is outside the maximum possible actuator limits, thereby demonstrating the efficacy of the proposed networked APIR and numerically validating the directional compatibility certificate.

The second scenario considers the same five agents interacting over the
directed graph $\mathcal{G}_2$, which admits only a directed spanning tree
(shown in \Cref{fig:spanning_tree}), rather than strong connectivity. All
design parameters retain their values from the first scenario, except for
the gain $k=0.08$. The agent states are initialized as
$\mathbf{x}(0)=[0.20,\,0.56,\,0.55,\,0.45,\,0.15]^{\top}$, and the actuator
states are re-initialized as
$\mathbf{u}(0)=[0.55,\,-0.67,\,0,\,0.7,\,-0.65]^{\top}$.
As shown in \Cref{fig:x2}, the agents again achieve safe consensus, with all
state trajectories remaining within the moving safety corridor and
converging to $x_c(t)$. \Cref{fig:v2} confirms that the commanded signals
$v_i(t)$ remain bounded throughout, and \Cref{fig:input_ana2} verifies that
the realized inputs $u_i(t)$ stay within the analytically certified bounds
$-U_{i,\mathcal{K}}^{-} \leq u_i(t) \leq U_{i,\mathcal{K}}^{+}$. The actuator compatibility margins $(\mu_i^{+},\mu_i^{-})$ are computed as
$(0.0789,\ 0.3186)$, $(1.0619,\ 0.4635)$, $(0.1504,\ 0.0673)$,
$(0.2964,\ 1.1235)$, and $(0.8447,\ 0.0744)$ for agents $1$ through
$5$, respectively, all of which are strictly positive in this case as well. The rooted spanning-tree topology yields qualitatively similar performance to the
strongly connected case, confirming that \Cref{thm:main} does not require strong
connectivity or weight balance.

For partial pinning, we set $\mathbf{G}=\diag\{1,0,0,1,0\}$, $\kappa=0.23$, and
$z_{\mathrm{c}}=-0.4$. \Cref{fig:strongly_connected1} is selected as the interaction topology and furthermore agents 1 and 4 receive the reference. The pinned initial
condition is specified by $\mathbf{z}(0)=[-0.65,-0.21,0,-0.62,-0.31]^{\top}$ and
$\mathbf{u}(0)=[0.50,-0.52,-0.15,0.66,-0.40]^{\top}$. The inverse barrier map yields
$\mathbf{x}(0)=[-0.21,0.17,0.36,-0.18,0.08]^{\top}$. The corresponing results are illustrated in \Cref{fig:pinning_results}. The pinned example satisfies
\Cref{cor:pinning}, and all five agents converge to the physical trajectory
generated by $z_{\mathrm{c}}$ (shown in \Cref{fig:x3}), including the three uninformed agents that access the reference only through directed exchange. Similar to previous cases, it can be observed from \Cref{fig:input_ana3} that realized actuator inputs stay within the analytical certified bounds.

\section{Conclusions}\label{sec:conclusion}
We formulated a directed safe consensus as a networked APC problem. APIR retains the physical input as a constrained dynamic state, and the distributed synthesis realizes the barrier-coordinate consensus field through the admissible input dynamics. The exact cascade preserves the rooted-digraph geometry and quantifies the APIR-transient correction to the directed-network agreement value. APC compatibility uses transformed-state bounds to certify uniform output margins, direction-specific input bounds, a positive APIR gain margin, bounded commands, and complete solutions. Partial pinning assigns the safe collective trajectory without a global reference broadcast. The regional character of the certificate reflects the finite authority available relative to the initial transformed dispersion and corridor motion. Extensions to switching rooted digraphs, vector outputs, uncertain APIR parameters, and sampled neighbor exchange require additional robustness and hybrid arguments, which are part of our future work.

\bibliographystyle{IEEEtran}
\bibliography{references}
\end{document}